\newtheorem{thm}{Theorem}
\newtheorem{lem}[thm]{Lemma}
\theoremstyle{definition}
\newtheorem*{move}{Move}
\providecommand{\NN}{\mathbb{N}}
\providecommand{\RR}{\mathbb{R}}
\providecommand{\mc}[1]{\mathcal{#1}}
\providecommand{\mr}[1]{\mathrm{#1}}
\providecommand{\ind}[1]{{1}_{#1}}
\providecommand{\diag}{{\operatorname{diag}}}
\newcommand{\dd}{{\,\mathrm d}}
\let\tilde\widetilde
\renewcommand{\phi}{\varphi}
\newcommand{\given}{\,|\,}
\renewcommand{\th}{\theta}
\newcommand{\scale}{{s^2}}
\newcommand\bquot{B(j'\given j)}
\newcommand\rquot{R(j'\given j)}
\journal{Computational Statistics and Data Analysis}
\begin{document}

\begin{frontmatter}


\title{Reversible jump MCMC for nonparametric drift estimation for diffusion processes}

\author[diam]{Frank van der Meulen}
\author[diam]{Moritz Schauer\corref{corresp}}
\author[kdv]{Harry van Zanten}
\cortext[corresp]{Corresponding author. Address: TU Delft, Mekelweg 4, 2628 CD Delft, The Netherlands.  E-mail: {\tt m.r.schauer@tudelft.nl}. Tel: ++31 15 2782546.}
\address[diam]{Delft Institute for Applied Mathematics (Delft University of Technology).}
\address[kdv]{Korteweg-de Vries Institute for Mathematics (University of Amsterdam).}



\begin{abstract}
In the context of nonparametric Bayesian estimation a Markov chain Monte Carlo algorithm is devised and implemented to sample from the posterior distribution of the drift function of a continuously or discretely observed one-dimensional diffusion. The drift is modeled by a scaled linear combination of basis functions with a Gaussian prior on the coefficients. The scaling parameter is equipped with a partially conjugate prior.  The number of basis function in the drift is equipped with a prior distribution as well.  For continuous data, a reversible jump Markov chain algorithm enables the exploration of the posterior over  models of varying dimension. Subsequently, it is explained  how data-augmentation can be used to extend  the algorithm to deal with diffusions observed discretely in time. Some examples illustrate that the method can give satisfactory results. In these examples a comparison is made with another existing method as well.
\end{abstract}



\end{frontmatter}



\section{Introduction}
Suppose we observe a diffusion process $X$, given as the  solution of the stochastic differential equation (SDE)
\begin{equation}\label{eq:sde}
 \dd X_t = b(X_t) \dd t + \dd W_t,\qquad  X_0 = x_0,
\end{equation}
with initial state $x_0$ and unknown drift function $b$. The aim is to estimate the drift $b$ when a sample path of the diffusion is observed continuously up till
a time $T > 0$  or at discrete times $0, \Delta, {2\Delta}, \ldots, {n\Delta}$, for some $\Delta > 0$ and $n \in \NN$.

Diffusion models are widely employed in a variety of scientific fields, including physics, economics and biology.
Developing methodology for fitting SDEs  to observed data has therefore become an important problem.
In this paper we restrict the exposition to the case that the drift function is 1-periodic and the diffusion function is identically equal to $1$.
This is motivated by applications in which the data consists of recordings of angles, cf.\ e.g.\ \cite{Yvo}, \cite{Hindriks} or \cite{Pokern}.
The methods we propose can however be adapted to work in more general setups, such as ergodic diffusions with non-unit diffusion coefficients.
In the continuous observations case, a diffusion with periodic drift could alternatively be viewed as diffusion on the circle. Given only discrete observations on the circle, the information about how many turns around the circle the process has made between the observations is lost however and the total number of  windings is unknown. For a lean exposition we concentrate therefore on diffusions with periodic drift on $\RR$. In the discrete observations setting the true circle case could be treated by introducing a latent variable that keeps track of the winding number.

In this paper we propose a new approach to making nonparametric Bayesian inference for the model (\ref{eq:sde}).
A Bayesian method can be attractive since it does not only yield an estimator for the unknown drift function, but also gives
a quantification of the associated uncertainty through the spread of the posterior distribution, visualized for instance
by pointwise credible intervals.
Until now the development of Bayesian methods for diffusions has largely focussed on parametric models. In such models
it is assumed that the drift is known up to a finite-dimensional parameter  and the problem reduces to making inference
about that parameter. See for instance the papers \cite{Eraker}, \cite{RobertsStramer}, \cite{BeskosPapaspiliopoulosRoberts},
 to mention but a few.
When no obvious parametric specification of the drift function is available
it is sensible to explore  nonparametric estimation methods, in order to reduce the risk  of model misspecification
or to validate certain parametric specifications.
The literature on nonparametric Bayesian methods for SDEs is however still very limited at  the present time. The only
paper which proposes a practical method we are aware of is \cite{Pokern}. The theoretical, asymptotic behavior of the procedure
of \cite{Pokern} is studied in the recent paper \cite{PokernStuartvanZanten}. Other papers dealing with asymptotics in this framework include
\cite{PanzarVZanten} and \cite{VdMeulenVZanten}, but these do not propose practical computational methods.

The approach we develop in this paper extends or modifies that of \cite{Pokern} in a number of directions
and employs different numerical methods.
\cite{Pokern} consider  a  Gaussian prior distribution on the periodic drift function $b$. This prior is defined
as a Gaussian distribution on $L^2[0,1]$ with densely defined inverse covariance operator
(precision operator)
\begin{equation}\label{eq: *}
\eta ((-\Delta)^p + \kappa I),
\end{equation}
where $\Delta$ is the one-dimensional Laplacian (with periodic boundaries conditions), $I$ is the identity operator and
$\eta, \kappa > 0$ and  $p \in \NN$ are fixed hyper parameters.
 It is asserted in \cite{Pokern} and proved in \cite{PokernStuartvanZanten} that
if the diffusion is observed continuously, then for this prior the posterior mean can be characterized as the weak solution of a certain
differential equation involving the local time of the diffusion. Moreover,  the  posterior precision operator can be
explicitly expressed as a differential operator as well. Posterior
computations can then be done using numerical methods for differential equations.

To explain our alternative  approach we note, as in \cite{PokernStuartvanZanten}, that the prior  just defined
can be described equivalently  in terms of series expansions.  Define the basis functions $\psi_k \in L^2[0,1]$ by setting $\psi_1\equiv 1$, and for $k \in \NN $
$\psi_{2k}(x) = \sqrt{2}\sin(2k\pi x)$ and $\psi_{2k+1}(x) = \sqrt{2}\cos(2k\pi x)$.
Then the prior  is the law of the random function
\[
x \mapsto \sum_{l=1}^\infty \sqrt{\lambda_l }Z_l \psi_l(x),
\]
where the $Z_l$ are independent, standard normal variables and for $l \ge 2$
\begin{equation}\label{eq:prior:pokern}
\lambda_l = \Big(\eta \Big(4\pi^2\Big\lceil\frac{l}{2}\Big\rceil^2\Big)^p + \eta\kappa\Big)^{-1}.
\end{equation}
This characterization shows in particular that the hyper parameter $p$ describes the regularity of the prior
through the decay of the Fourier coefficients
and $1/\eta$ is a multiplicative scaling parameter.
The priors we consider in this paper are also defined via series expansions. However, we make a number
of substantial changes.

Firstly, we allow for different types of basis functions.
Different basis functions instead of the Fourier-type functions may be computationally attractive.
The posterior computations involve the inversion of certain large matrices and choosing basis functions with
local support  typically makes these matrices sparse.
In the general exposition we keep the basis
functions completely general but  in the simulation results we will consider wavelet-type Faber--Schauder functions  in addition
to the Fourier basis. A second difference is that we truncate the infinite series at a level that we endow with a prior as well.
In this manner we can achieve considerable computational gains if the data driven truncation point is relatively small,
so that only low-dimensional models are used and hence only relatively small matrices have to be inverted.
A last important change is that we do not set the multiplicative hyper parameter at a fixed value, but instead
endow it with a prior and let the data determine the appropriate value.

We will present simulation results in Section \ref{sec: sim} which illustrate that our approach indeed has several
advantages. Although the truncation of the series at a data driven point
involves incorporating reversible jump MCMC steps in our computational algorithm,
we will show that it can indeed lead to a considerably faster procedure compared to truncating
at some fixed high level. The introduction of a prior on the multiplicative hyper parameter reduces
the risk of misspecifying the scale of the drift. We will  show in Section \ref{sec:scale} that using a fixed scaling parameter
can seriously deteriorate the quality of the inference, whereas our hierarchical procedure with a prior on that parameter
is able  to adapt to the true scale of the drift.
A last advantage that we will illustrate numerically is that by introducing both  a prior on the scale and on the truncation level we
 can achieve some degree of adaptation to smoothness as well.

Computationally we use a combination of methods that are well established in other statistical settings.
Within models in which the truncation point of the series is fixed we use  Gibbs sampling based
on standard inverse gamma-normal computations. We combine this with reversible jump MCMC to
move between different models. For these moves, we can use an auxiliary Markov chain to propose a model, and subsequently draw  coefficients from their posterior distribution within that model. Such a scheme has been proposed for example in \cite{Godsill} for estimation in autoregressive time-series models.  In  case of discrete observations we also incorporate a data augmentation
step using a Metropolis--Hastings sampler to generate diffusion bridges. Our numerical examples illustrate that
using our algorithm it  is computationally feasible to carry out nonparametric Bayesian inference for low-frequency diffusion data
using a non-Gaussian hierarchical prior which is more flexible than previous methods.

A brief outline of the article is as follows: In Section \ref{sec:prior} we give a concise prior specification. In the section thereafter, we present the reversible jump algorithm to draw from the posterior for continuous-time data. Data-augmentation is discussed in Section  \ref{dataaug}. In Section \ref{sec: sim} we give some examples to illustrate our method. We end with a section on numerical details.

\section{Prior distribution}\label{sec:prior}

\subsection{General prior specification}

To define our prior on the periodic drift function $b$ we write a truncated series expansion for $b$
and put prior weights on the truncation point and on the coefficients in the expansion.
We employ general
$1$-periodic, continuous basis functions $\psi_l$, $l \in \NN$. In the concrete examples
ahead we will consider in particular Fourier and Faber-Schauder functions. We fix an increasing sequence
of natural numbers $m_j$, $j \in \NN$, to group the basis functions into {\em levels}. The
functions $\psi_1, \ldots, \psi_{m_1}$ constitute level $1$, the functions $\psi_{m_1+1}, \ldots, \psi_{m_2}$ correspond to
level $2$, etcetera. In this manner we can accommodate both families of basis functions with a single index (e.g.\ the Fourier basis)
and doubly indexed families (e.g.\ wavelet-type bases) in our framework. Functions that are linear combinations
of the first $m_j$ basis functions $\psi_1, \ldots, \psi_{m_j}$ are said to belong to {\em model} $j$. Model $j$ encompasses levels $1$ up till $j$.

To define the prior on $b$ we first put a prior on the model index $j$, given by certain prior weights $p(j)$, $j \in \NN$.
By construction, a function in model $j$ can be expanded as
$\sum_{l=1}^{m_j} \theta^j_l\psi_l$
for a certain vector of coefficients $\theta^j \in \RR^{m_j}$. Given $j$, we endow this vector with a prior by postulating that
the coefficients $\theta^j_l$ are given by an inverse gamma scaling constant times independent, centered Gaussians
with decreasing variances $\xi^2_l$, $l \in \NN$. The  choice of the constants $\xi^2_l$ is discussed in Sections \ref{sec: fourier} and \ref{sec: schauder}.

Concretely, to define the prior we fix model probabilities $p(j)$, $j \in \NN$, decreasing variances $\xi^2_l$,
positive constants $a, b > 0$ and set $\Xi^j = \diag({\xi^2_1, \ldots, \xi^2_{m_j}})$.
Then the hierarchical prior $\Pi$ on the drift function $b$ is defined as follows:

\medskip

\begin{center}
\framebox{
\parbox{0.6\textwidth}{
\begin{align*}
j & \sim p(j),\\
s^2 & \sim {\rm IG}(a,b),\\
\theta^j\given j, s^2 & \sim N_{m_j}(0, s^2\Xi^j),\\
b \given j, s^2, \theta^j & \sim \sum_{l=1}^{m_j} \theta^j_l\psi_l.
\end{align*}

}}
\end{center}

\subsection{Specific  basis functions}

Our general setup is chosen such that we can incorporate bases indexed by a single number and doubly indexed (wavelet-type) bases.
For the purpose of illustration, one example for each case is given below. First, a Fourier basis expansion, which emphasizes the (spectral) properties of the drift in frequency domain and second, a (Faber--) Schauder system which features basis elements with local support.

\subsubsection{Fourier basis}
\label{sec: fourier}

In this case we set  $m_j = 2j-1$ and the basis functions are defined as
\[
\psi_{1}\equiv 1, \quad \psi_{2k}(x) = \sqrt2\sin(2k\pi x), \quad \psi_{2k+1}(x) = \sqrt2\cos(2k\pi x),\quad k \in \NN.
\]
These functions form an orthonormal basis of $L^2[0,1]$ and the decay of the Fourier coefficients of a function is
related to its regularity. More precisely, if $f = \sum_{l \ge 1} \theta_l \psi_l$ and $\sum_{l \ge 1} \theta^2_l l^{2\beta} < \infty$
for $\beta > 0$, then $f$ has Sobolev regularly $\beta$, i.e.\ it has square integrable weak derivatives up to the order $\beta$.
By setting $\xi^2_l \sim l^{-1-2\beta}$ for $\beta > 0$,  we  obtain a prior  which  has a version with $\alpha$-H\"older continuous sample paths for all $\alpha<\beta$.
A possible choice for the model probabilities is to take them geometric, i.e.\ $p(j) \sim \exp(-Cm_j)$ for some $C > 0$.

Priors of this type are quite common in other statistical settings. See for instance  \cite{Zhao} and \cite{ShenWasserman},
who considered priors of this type in the context of the white noise model and nonparametric regression. The prior
can be viewed as an extension of the one
 of \cite{Pokern} discussed in the introduction.
The latter uses the same basis functions and decreasing variances with $\beta = p-1/2$. It does not put a prior on the model index $j$ however
(it basically takes $j = \infty$) and uses a fixed scaling parameter whereas we put a prior on $s$.
In Section \ref{sec: sim} we argue that our approach has a number of advantages.

\subsubsection{Schauder functions}
\label{sec: schauder}

The Schauder basis functions are a location and scale family based on the ``hat'' function
$\Lambda(x) = (2x)\ind{[0,\frac12)}(x) + 2(x-1) \ind{[\frac12,1]}(x)$.
With $m_j = 2^{j-1}$, the Schauder system is given by $\psi_{1} \equiv 1$ and for $l \ge 2$ $\psi_l(x) = \Lambda_l(x \operatorname{mod} 1)$, where
$$\Lambda_{2^{j-1} + k}(x) = \Lambda(2^{j-1}x - k+1), \quad j\ge 1,  \quad k=1,\ldots, 2^{j-1}.$$
These functions have compact supports. A Schauder expansion thus emphasizes local properties of the sample paths.
For $\beta \in (0,1)$, a function $f$ with Faber--Schauder expansion $f = \sum_{l \ge 1} c_l\psi_l= c_1\psi_1 + \sum_{j\ge 1}\sum_{k=1}^{2^{j-1}} c_{2^{j-1}+k} \psi_{2^{j-1}+k}$ has H\"older regularity
of order $\beta$ if and only if  $|c_l| \le  {\rm const.} \times l^{-\beta}$ for every $l$ (see for instance
\cite{kashin}). It follows that if in our setup we take
$\xi_{2^{j-1} + k} = 2^{-\beta j}$ for  $j\ge 1$ and $k=1,\ldots, 2^{j-1} $, then we obtain a prior with regularity $\beta$.
A natural choice for $p(j)$ is again
$ p(j) \sim \exp(-C m_j)$.

The Schauder system is well known in the context of constructions of  Brownian motion,
see for instance \cite{RW}.
The Brownian motion case corresponds to prior regularity   $\beta = 1/2$.


\section{The Markov chain Monte Carlo sampler}

\subsection{Posterior within a fixed model}\label{likelihood}\label{sec: within}
When continuous observations $x^T = (x_t: t \in [0,T])$ from the diffusion model (\ref{eq:sde}) are available,
then we have an explicit expression for the likelihood $p(x^T \given b)$. Indeed, by Girsanov's formula
we almost surely have
\begin{equation}\label{eq: girsanov}
p(x^T \given b) =  \exp\Big( \int_0^T b(x_t) \dd x_t- \frac12 \int_0^T b^2(x_t) \dd t\Big) .
\end{equation}
Cf.\ e.g. \cite{LiptserShiryayevI}. Note in particular that the log-likelihood is quadratic in $b$.

Due to the special choices in the construction of our hierarchical prior, the quadratic structure  implies that
within a fixed model $j$, we can do  partly explicit posterior computations.
More precisely, we can derive the posterior distributions of the scaling constant $s^2$
and the vector of coefficients $\theta^j$ conditional on all the other parameters. The continuous observations
enter the expressions through the vector $\mu^j \in \RR^{m_j}$ and the $m_j\times m_j$ matrix $\Sigma^j$ defined by
\begin{equation}\label{eq:mu}	
\mu^j_l = \int_0^T \psi_l(x_t) \dd x_t, \quad l = 1, \ldots, m_j,
\end{equation}
and
\begin{equation}\label{eq:sigma}
\Sigma^j_{l, l'} = \int_0^T \psi_l(x_t)\psi_{l'}(x_t) \dd t,  \quad l, l' = 1, \ldots, m_j.
\end{equation}

\bigskip

\begin{lem}\label{lem:postcoeff}
We have
\begin{align*}
\theta^j & \given  s^2, j, x^T \sim N_{m_j}((W^j)^{-1}\mu^j, (W^j)^{-1}),\\
s^2 & \given \theta^j, j, x^T \sim {\rm IG}(a + (1/2) m_j, b + (1/2) (\th^{j})^T (\Xi^j)^{-1} \th^j ),
\end{align*}
where $W^j= \Sigma^j +  (s^2\Xi^j)^{-1}$.
\end{lem}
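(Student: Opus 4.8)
The plan is to exploit that, within a fixed model $j$, the drift is the finite linear combination $b = \sum_{l=1}^{m_j}\theta^j_l \psi_l$, so that the whole likelihood depends on the infinite-dimensional object $b$ only through the finite vector $\theta^j$. First I would substitute this expansion into the Girsanov likelihood \eqref{eq: girsanov}. Linearity of the stochastic integral gives $\int_0^T b(x_t)\dd x_t = (\theta^j)^T\mu^j$ with $\mu^j$ as in \eqref{eq:mu}, and bilinearity of the ordinary integral gives $\int_0^T b^2(x_t)\dd t = (\theta^j)^T\Sigma^j\theta^j$ with $\Sigma^j$ as in \eqref{eq:sigma}. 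Hence the log-likelihood within model $j$ is the quadratic form
\[
\log p(x^T\given \theta^j, j) = (\theta^j)^T\mu^j - \tfrac12 (\theta^j)^T\Sigma^j\theta^j,
\]
which is the key reduction and the source of conjugacy.

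For the first assertion, I would compute the full conditional of $\theta^j$ by Bayes' rule, multiplying the likelihood by the Gaussian prior density $p(\theta^j\given j, s^2)\propto \exp\bigl(-\tfrac12(\theta^j)^T(s^2\Xi^j)^{-1}\theta^j\bigr)$. Collecting the two quadratic contributions yields, up to a $\theta^j$-free factor,
\[
p(\theta^j\given s^2, j, x^T)\propto \exp\Bigl((\theta^j)^T\mu^j - \tfrac12(\theta^j)^T W^j\theta^j\Bigr),\qquad W^j = \Sigma^j + (s^2\Xi^j)^{-1}.
\]
Completing the square in $\theta^j$ identifies this as the kernel of the $N_{m_j}\bigl((W^j)^{-1}\mu^j, (W^j)^{-1}\bigr)$ density, the constant $\tfrac12(\mu^j)^T(W^j)^{-1}\mu^j$ being absorbed into the normalisation. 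This gives the first line of the lemma.

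For the second assertion, the crucial observation is that once $\theta^j$ (and $j$) are fixed the drift $b$ is completely determined, so the likelihood \eqref{eq: girsanov} carries no dependence on $s^2$. Consequently the full conditional of $s^2$ factorises as $p(s^2\given \theta^j, j, x^T)\propto p(s^2)\,p(\theta^j\given j, s^2)$. Here $p(s^2)\propto (s^2)^{-a-1}\exp(-b/s^2)$ is the ${\rm IG}(a,b)$ density, while the normalised Gaussian prior contributes the factor $(s^2)^{-m_j/2}\exp\bigl(-\tfrac1{2s^2}(\theta^j)^T(\Xi^j)^{-1}\theta^j\bigr)$, where the power $-m_j/2$ comes from the determinant $\det(s^2\Xi^j)=(s^2)^{m_j}\det\Xi^j$. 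Multiplying these and matching exponents reads off the inverse gamma kernel with shape $a + \tfrac12 m_j$ and scale $b + \tfrac12(\theta^j)^T(\Xi^j)^{-1}\theta^j$, as claimed.

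There is no substantial obstacle here: the result is a standard normal--inverse-gamma conjugacy computation, made possible by the quadratic form of the Girsanov log-likelihood. The only points requiring care are keeping track of the determinant factor $(s^2)^{-m_j/2}$ when isolating the $s^2$-dependence, and noting the decoupling of $s^2$ from the data conditional on $\theta^j$, which is what makes the $s^2$ update independent of $\mu^j$ and $\Sigma^j$.
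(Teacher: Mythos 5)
Your proposal is correct and follows essentially the same route as the paper: reduce the Girsanov log-likelihood to the quadratic form $(\theta^j)^T\mu^j - \tfrac12(\theta^j)^T\Sigma^j\theta^j$, then read off the normal full conditional for $\theta^j$ by completing the square and the inverse-gamma full conditional for $s^2$ from the prior factors (noting, as you do, that the likelihood is free of $s^2$ given $\theta^j$). Your careful tracking of the $(s^2)^{-m_j/2}$ determinant factor matches the paper's computation exactly.
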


\begin{proof}
The computations are straightforward.
We note that by Girsanov's formula (\ref{eq: girsanov}) and the definitions of $\mu^j$ and $\Sigma^j$ we have
\begin{equation}\label{eq: gg}
p(x^T\given j, \theta^j, s^2) = e^{(\theta^j)^T\mu^j -\frac12(\theta^j)^T\Sigma^j \theta^j}
\end{equation}
and by construction of the prior,
\[
p(\theta^j \given j, s^2) \propto (s^2)^{-\frac{m_j}2}e^{-\frac12 (\theta^j)^T(s^2\Xi^j)^{-1} \theta^j},
\quad p(s^2) \propto (s^2)^{-a-1}e^{-\frac b{s^2}}.
\]
It follows that
\[
p(\theta^j  \given  s^2, j, x^T)
\propto p(x^T\given j, \theta^j, s^2) p(\theta^j \given j, s^2)
\propto e^{(\theta^j)^T\mu^j -\frac12(\theta^j)^TW^j \theta^j},
\]
which proves the first assertion of the lemma.
Next we write
\begin{align*}
p(s^2  \given \theta^j, j, x^T) &\propto p(x^T\given j, \theta^j, s^2) p(\theta^j \given j, s^2)p(s^2)
\\&\propto (s^2)^{-m_j/2 - a - 1} \exp\left(-\frac{-b - (1/2) (\th^{j})^T (\Xi^j)^{-1} \th^j}{s^2}\right),
\end{align*}
which yields the second assertion.
\end{proof}

The lemma shows that Gibbs sampling can be used to sample (approximately)
from the continuous observations
posteriors of $s^2$ and $\theta^j$ within a fixed model $j$. In the next subsections we explain how to combine
this with reversible jump steps between different models and with data augmentation through the generation of
diffusion bridges in the case of discrete-time observations.

\subsection{Reversible jumps between models}

In this subsection we still assume that  we have continuous data $x^T = (x_t: t \in [0,T])$
at our disposal. We complement the within-model computations given by
Lemma \ref{lem:postcoeff} with (a basic version of) reversible jump MCMC (cf.\ \cite{Green}) to explore different models.
We will construct a Markov chain which has the full posterior $p(j, \theta^j, s^2 \given x^T)$
as invariant distribution and hence can be used to generate approximate draws from the posterior distribution of
the drift function $b$.

We use an auxiliary Markov chain on $\NN$ with transition probabilities $q(j'\given j)$, $j, j' \in \NN$.
As the notation suggests, we denote by $p(j \given s^2,  x^T)$ the conditional (posterior) probability
of model $j$ given the parameter $s^2$ and  the data $x^T$.
Recall that $p(j)$ is the prior probability of model $j$.
Now we define the quantities
\begin{align}\label{eq:rel}
 B(j' \given j) &= \frac{p(x^T\given j', s^2)}{p(x^T\given j, s^2)},\\
  R(j'\given j)  & =  \frac{p(j')q(j\mid j')}{p(j)q(j' \mid j)}.
 \end{align}
Note that  $B(j'\given j)$ is the  Bayes factor   of model $j'$ relative to model $j$,
for a fixed scale $s^2$.
To simplify the notation, the dependence of this quantity on $s^2$ and $x^T$ is suppressed.

The overall structure of the algorithm that we propose is that of a
componentwise Metropolis--Hastings (MH) sampler. The scale parameter $\scale$ is taken as component I and
the pair $(j, \th^j)$ as component II.  Starting with some initial value $(j_0, \th^{j_0}, s^2_0)$,
alternately moves from $(j, \th^j, \scale)$ to $(j, \th^{j}, (s')^2)$ and moves from
$(j, \th^j, \scale)$ to $(j', \th^{j'}, \scale)$ are performed, where in each case the other component remains unchanged.

Updating the first component is done with a simple Gibbs move, that is a new value for $\scale{}$ is sampled from its posterior
distribution described by
Lemma \ref{lem:postcoeff},  given the current value of the remaining parameters.

\begin{move}[I] Update the scale. Current state:  $(j,\th^j, s^2)$.
\begin{itemize}
\item Sample $(s')^2 \sim {\rm IG}(a + (1/2) m_j, b + (1/2) (\th^{j})^T (\Xi^j)^{-1} \th^j )$.
\item Update the state to  $(j,\th^j, (s')^2)$.
\end{itemize}
\end{move}

 The second component $(j, \theta^j)$ has varying dimension and a reversible jump move is employed to ensure detailed balance
 (e.g.\ \cite{Green2003}, \cite{ChapmanHall}).  To perform a transdimensional move, first a new model $j'$ is chosen and a sample from the posterior for $\theta^j$ given by Lemma \ref{lem:postcoeff} is drawn.

\begin{move}[II] Transdimensional move. Current state: $(j,\th^j, \scale)$.
\begin{itemize}
\item Select a new model $j'$ with probability $q(j' \mid j)$.
\item Sample $\th^{j'} \sim N_{m_j'}((W^{j'})^{-1}\mu^{j'}, (W^{j'})^{-1})$.
\item Compute $r = \bquot\rquot.$
\item With probability $\min\{1, r\}$ update the state to
$(j', \th^{j'}, \scale{})$, else leave the state  unchanged.
\end{itemize}
\end{move}

All together we have now constructed a Markov chain $Z_0, Z_1, Z_2, \ldots$ on the transdimensional space
$E = \bigcup_{j \in \NN}\, \{j\}\times \RR^{m_j}\times  (0,\infty)$, whose evolution can be described as follows.

\bigskip

\renewcommand{\arraystretch}{1.2}

\begin{center}
\begin{tabular}{|p{0.7\textwidth}|}
\hline
{\bf Continuous observations algorithm}\\
\hline
\hline
{\em Initialization}. Set $Z_0  =  (j_0, \theta^{j_0}, s^2_0)$.\\
{\em Transition}. \quad Given the current state $Z_j = (j, \theta^{j}, s^2)$:\\
$\bullet$ Sample $(s')^2 \sim {\rm IG}(a + (1/2) m_j, b + (1/2) (\th^{j})^T (\Xi^j)^{-1} \th^j )$.\\
$\bullet$ Sample $j' \sim q(j' \given j)$,\\
$\bullet$ Sample $\th^{j'} \sim N_{m_j'}((W^{j'})^{-1}\mu^{j'}, (W^{j'})^{-1})$. \\
$\bullet$ With probability $\min\{1,\bquot\rquot\}$,  set \\ \quad
$Z_{j+1} = (j', \theta^{j'}, (s')^2)$, else set
$Z_{j+1} = (j, \theta^{j}, (s')^2)$.\\
\hline
\end{tabular}
\end{center}

\bigskip

\noindent
Note that $r = ({q(j \mid j')}/{q(j \mid j')})({p(j' \mid s^2, x^T)}/{p(j' \mid s^2, x^T)})$,  so effectively we perform Metropolis--Hastings for updating $j$.  As a consequence, the vector of coefficients $\th^{j'}$ needs to be drawn only in case the proposed  $j'$ is accepted.

The following lemma asserts that the constructed chain indeed has the desired stationary
distribution.

\begin{lem} \label{lem: 1}
The Markov chain $Z_0, Z_1, \ldots$ has the posterior
$p(j, \theta^j, s^2 \given x^T)$ as invariant  distribution.
\end{lem}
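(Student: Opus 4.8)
The plan is to exploit the fact that the chain is assembled as a composition of two component-wise updates---the scale update (Move~I) and the transdimensional update (Move~II)---and to invoke the standard principle that a cycle of Markov kernels, each leaving a common distribution $\pi$ invariant, itself leaves $\pi$ invariant. The task thus reduces to checking separately that Move~I and Move~II each preserve the posterior $p(j,\theta^j,s^2\given x^T)$.

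Move~I is the easy half. It is a Gibbs step that draws $(s')^2$ from the inverse gamma law which, by Lemma~\ref{lem:postcoeff}, is exactly the full conditional $p(s^2\given \theta^j, j, x^T)$. Any update that resamples a block from its correct full conditional leaves the joint target invariant, so nothing further is required here.

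The substantive part is Move~II, for which I would verify detailed balance with respect to the posterior, holding $s^2$ fixed throughout. The crucial structural point is that the coefficient vector $\theta^{j'}$ is drawn afresh from its conditional posterior $N_{m_{j'}}((W^{j'})^{-1}\mu^{j'},(W^{j'})^{-1})$, i.e.\ from $p(\theta^{j'}\given s^2,j',x^T)$, rather than via a deterministic map of $\theta^j$ and auxiliary randomness. This renders Move~II an independence-type proposal on the pair $(j,\theta^j)$, so that in the reversible-jump formalism the Jacobian equals $1$ and the proposal ``density'' for $(j,\theta^j)\mapsto(j',\theta^{j'})$ factorizes as $q(j'\given j)\,p(\theta^{j'}\given s^2,j',x^T)$. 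The acceptance ratio is then
\[
\frac{p(j',\theta^{j'},s^2\given x^T)\,q(j\given j')\,p(\theta^{j}\given s^2,j,x^T)}{p(j,\theta^{j},s^2\given x^T)\,q(j'\given j)\,p(\theta^{j'}\given s^2,j',x^T)}.
\]
Factoring the posterior as $p(\theta^j\given s^2,j,x^T)\,p(j\given s^2,x^T)\,p(s^2\given x^T)$ and cancelling the two coefficient-conditional factors together with the common factor $p(s^2\given x^T)$, this collapses to $\{p(j'\given s^2,x^T)\,q(j\given j')\}/\{p(j\given s^2,x^T)\,q(j'\given j)\}$. It remains to identify this with the ratio $r=\bquot\rquot$ used in the algorithm. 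Since $j$ and $s^2$ are a priori independent in the hierarchical prior, Bayes' rule gives $p(j\given s^2,x^T)\propto p(x^T\given j,s^2)\,p(j)$, whence
\[
\frac{p(j'\given s^2,x^T)}{p(j\given s^2,x^T)}=\frac{p(x^T\given j',s^2)}{p(x^T\given j,s^2)}\cdot\frac{p(j')}{p(j)}=\bquot\cdot\frac{p(j')}{p(j)},
\]
so the simplified ratio equals $\bquot\rquot$ exactly, establishing detailed balance for Move~II.

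I expect the main obstacle to lie in the bookkeeping that justifies the absence of a Jacobian term: one must set up the reversible-jump dimension-matching so that regenerating $\theta^{j'}$ entirely from its conditional posterior is correctly recognized as an independence proposal, respecting the symmetric measure underlying Green's construction. Once this framing is in place, the remaining cancellations and the appeal to prior independence of $j$ and $s^2$ are routine.
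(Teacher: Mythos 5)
Your proposal is correct and follows essentially the same route as the paper: detailed balance for the Gibbs scale update, detailed balance for the transdimensional move by treating the freshly drawn $\theta^{j'}$ as an independence-type proposal and cancelling the coefficient conditionals in the factorized posterior, and then composition of the two invariant kernels. You are in fact slightly more explicit than the paper about identifying the algorithm's ratio $\bquot\rquot$ with the Metropolis--Hastings ratio via the prior independence of $j$ and $s^2$, a step the paper only notes in passing.
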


\begin{proof}
By Lemma \ref{lem:postcoeff} we have that in move I the chain moves from state $(j, \theta^j, s^2)$ to
$(j, \theta^{j}, s'^2)$ with probability
$p(s'^2 \given j, \theta^j, x^T)$.
Conditioning shows that we have detailed balance for this move, that is,
\begin{equation}\label{eq: balance}
p(j, \theta^j, s^2\given x^T) p(s'^2 \given j, \theta^j, x^T) =
p(j, \theta^j, s'^2\given x^T) p(s^2 \given j, \theta^j, x^T).
\end{equation}

In view of Lemma \ref{lem:postcoeff} again, the probability that the  chain moves from state $(j, \theta^j, s^2)$ to
$(j', \theta^{j'}, s^2)$ in move II equals, by construction,
\[
p((j, \theta^j, s^2)  \to (j', \theta^{j'}, s^2)) =
\min\Big\{1, \frac{q(j\mid j')}{q(j' \mid j)}\frac{p(j' \given s^2, x^T)}{p(j \given s^2, x^T)}\Big\}
 q(j'\given j)p(\theta^{j'} \given j', s^2, x^T).
\]
Now suppose first that the minimum is less than $1$. Then using
\[
p(j, \theta^j, s^2\given x^T) = p(\theta^j\given j, s^2, x^T) p(j \given s^2, x^T)p(s^2\given x^T)
\]
and
\[
p(j', \theta^{j'}, s^2\given x^T) = p(\theta^{j'}\given j', s^2, x^T) p(j' \given s^2, x^T)p(s^2\given x^T)
\]
it is easily verified that we have the detailed balance relation
\begin{align*}
 p(j, \theta^j, s^2\given x^T) p((j, \theta^j, s^2)  \to (j', \theta^{j'}, s^2))
 = p(j', \theta^{j'}, s^2\given x^T) p((j', \theta^{j'}, s^2)  \to (j, \theta^{j}, s^2)).
\end{align*}
for move II. The case that the minimum is greater than $1$ can be dealt with similarly.

We conclude that   we have detailed balance for both components of our  MH sampler. Since our algorithm is a variable-at-a-time Metropolis--Hastings algorithm, this
implies the statement of the lemma (see for example section 1.12.5 of \cite{ChapmanHall}).
\end{proof}

\subsection{Data augmentation for discrete data}
\label{dataaug}

So far we have been dealing with continuously observed diffusion. Obviously, the phrase ``continuous data'' should be interpreted properly.
In practice it means that the frequency at which the diffusion is observed is so high that
the error that is incurred by approximating the  quantities (\ref{eq:mu}) and (\ref{eq:sigma})
by their empirical counterparts, is negligible.
If we only have low-frequency, discrete-time observations at our disposal, these approximation errors can
typically not be ignored however and can  introduce undesired biases.
In this subsection we  explain how our algorithm can be extended to accommodate this situation as well.

We assume now that we only have partial observations
$x_0, x_\Delta, \dots, x_{n\Delta}$ of our diffusion process, for some $\Delta > 0$ and $n \in \NN$.
We set $T = n\Delta$.
The discrete observations constitute a Markov chain, but it is well known that
the transition densities of discretely observed diffusions
and hence the likelihood are not available in closed form in general. This complicates a Bayesian analysis.
An approach that has been proven to be
very fruitful,  in particular  in the context  of parametric estimation for discretely observed diffusions, is to view the continuous diffusion segments
between the observations as missing data and to treat them as latent (function-valued) variables.
Since the continuous-data likelihood is known (cf.\ the preceding subsection),
data augmentation methods (see \cite{TannerWong}) can be used to circumvent the unavailability of the likelihood
in this manner.

As discussed in \cite{VdMeulenVZanten} and shown in a practical setting by  \cite{Pokern}, the data augmentation approach is not limited to the parametric
setting  and can be used in the present nonparametric problem as well.
Practically it involves appending an extra step to the algorithm presented in the preceding subsection,
corresponding to  the simulation of the appropriate diffusion bridges.
If we denote again the continuous observations by $x^T = (x_t: t \in [0,T])$ and the
discrete-time observations by $x_\Delta, \ldots, x_{n\Delta}$, then using the same notation as above we essentially
want to sample from the conditional distribution
\begin{equation}\label{eq: piet}
p(x^T \given j, \theta^j, s^2, x_\Delta, \ldots, x_{n\Delta}).
\end{equation}
Exact simulation methods have been proposed in the literature to accomplish this, e.g.\ \cite{BeskosPapaspiliopoulosRoberts},
\cite{BeskosPapaspiliopoulosRobertsFearnhead}.
For our purposes exact simulation is not strictly necessary  however and it is
more convenient to add a Metropolis--Hastings step corresponding to a Markov chain that has the diffusion bridge
law given by (\ref{eq: piet}) as stationary distribution.

 Underlying the MH sampler for diffusion bridges  is the fact that by Girsanov's theorem,
the conditional distribution of the continuous segment $X^{(k)} = (X_t: t \in [(k-1)\Delta, k\Delta])$
given that $X_{(k-1)\Delta} = x_{(k-1)\Delta}$ and $X_{k\Delta} = x_{k\Delta}$, is equivalent to the distribution of a Brownian bridge
that goes from $x_{(k-1)\Delta}$ at time $(k-1)\Delta$ to $x_{k\Delta}$ at time $k\Delta$.
The corresponding Radon-Nikodym derivative is proportional to
\begin{equation}
\label{eq:condlike}
L_k(X^{(k)}\given b) =
	\exp\Big( \int_{(k-1)\Delta}^{k\Delta} b(X_t) \dd X_t- \frac12 \int_{(k-1)\Delta}^{k\Delta} b^2(X_t) \dd t\Big).
\end{equation}
We also note that due to the Markov property of the diffusion, the different diffusion bridges $X^{(1)}, \ldots,
X^{(n)}$, can be dealt with independently.

Concretely, the missing segments $x^{(k)} = (x_t: t \in ((k-1)\Delta, k\Delta))$, $k =1, \ldots, n$
can be added
as latent variables to the Markov chain constructed in the preceding subsection, and the following
move has to be added to moves I and II introduced above. It is a standard Metropolis--Hastings
step for the conditional law (\ref{eq: piet}),
with independent Brownian bridge proposals. For more details on this type of MH samplers for diffusion bridges
we refer to  \cite{RobertsStramer}.

\begin{move}[III]  Updating the diffusion bridges. Current state:
 $(j,\th^{j},\scale, x^{(1)}, \ldots, x^{(n)})$:
\begin{itemize}
  \item For $k =1, \ldots, n$, sample a Brownian bridge $w^{(k)}$ from
  $((k-1)\Delta, x_{(k-1)\Delta})$ to $(k\Delta, x_{k\Delta})$.
  \item For  $k  = {1, \dots, n}$, compute  $r_k = L_k(w^{(k)}\given b) /L_k(x^{(k)} \given b)$, for
  $b = \sum_{l \le m_j} \theta^j_l\psi_l$.
  \item Independently, for $k =1, \ldots, n$, with probability $\min \{1, r_k\}$ update $x^{(k)}$ to $w^{(k)}$, else retain $x^{(k)}$.
\end{itemize}
\end{move}

Of course the segments $x^{(1)}, \ldots, x^{(n)}$ can always be concatenated to yield a continuous function on $[0,T]$.
In this sense move III can be viewed as a step that generates new, artificial continuous data given the discrete-time data.
It is convenient to consider this whole continuous path on $[0,T]$ as the latent variable. When the new move is
combined with the ones defined earlier
a Markov chain  $\tilde Z_0, \tilde  Z_1, \tilde  Z_2, \ldots$ is obtained  on the space
$\tilde E = \bigcup_{j \in \NN}\, \{j\}\times \RR^{m_j}\times  (0,\infty) \times C[0,T]$.
Its evolution can be described as follows.

\bigskip

\begin{center}
\begin{tabular}{|p{0.9\textwidth}|}
\hline
{\bf Discrete observations algorithm}\\
\hline
\hline
{\em Initialization}. Set $\tilde  Z_0  =  (j_0, \theta^{j_0}, s^2_0, x_0^T)$,
where $x_0^T$ is for instance\\ obtained by linearly interpolating the observed data points.\\
{\em Transition}. \quad Given the current state $\tilde Z_j = (j, \theta^{j}, s^2, x^T)$, construct $\tilde  Z_{j+1}$
as follows:\\
$\bullet$ Sample $(s')^2 \sim {\rm IG}(a + (1/2) m_j, b + (1/2) (\th^{j})^T (\Xi^j)^{-1} \th^j )$, update $s^2$ to $(s')^2$.\\
$\bullet$ Sample $j' \sim q(j' \given j)$ and $\th^{j'} \sim N_{m_j'}((W^{j'})^{-1}\mu^{j'}, (W^{j'})^{-1})$. \\
$\bullet$ With probability $\min\{1,\bquot\rquot\}$,  update $(j, \theta^j)$ to $(j', \theta^{j'})$,
 else retain  $(j, \theta^{j})$.\\
 $\bullet$ For $k =1, \ldots, n$, sample a Brownian bridge $w^{(k)}$ from
  $((k-1)\Delta, x_{(k-1)\Delta})$ \\ \quad to $(k\Delta, x_{k\Delta})$ and
compute  $r_k = L_k(w^{(k)}\given b) /L_k(x^{(k)} \given b)$, for
  $b = \sum_{l \le m_j} \theta^j_l\psi_l$.\\
  $\bullet$ Independently, with probability $\min \{1, r_k\}$, update $x^{(k)}$ to $w^{(k)}$, else retain $x^{(k)}$.\\
\hline
\end{tabular}
\end{center}

\bigskip

\noindent
It follows from the fact that move III is a MH step for the conditional law (\ref{eq: piet}) and
Lemma \ref{lem: 1} that the new chain has the correct  stationary
distribution again.

\section{Simulation results}
\label{sec: sim}

The implementation of the algorithms presented in the preceding section involves
the computation of several quantities, including the Bayes factors $B(j'\given j)$
and sampling from the posterior distribution of $\theta^j$ given $j$ and $s^2$.
In  Section \ref{sec: num} we explain in some detail how these
issues can be tackled efficiently.  In the present
section we first investigate the performance of our method on simulated data.

For the drift function, we first choose the function $b(x)=12(a(x)+0.05)$ where
\begin{equation}
a(x)=\begin{cases} \frac27-x-\frac27(1-3x) \sqrt{|1-3x|} & x \in [0,2/3) \\
				-\frac27+\frac2{7}x & x \in [2/3,1]
\end{cases}
\end{equation}
This function is H\"older-continuous of order $1.5$ on $[0,1]$. A plot of $b$ and its derivative is shown in Figure \ref{fig:drift}.
Clearly, the derivative is not differentiable in $0$, $1/3$ and $2/3$.

\begin{figure}
\begin{center}
\includegraphics[width=.75\linewidth]{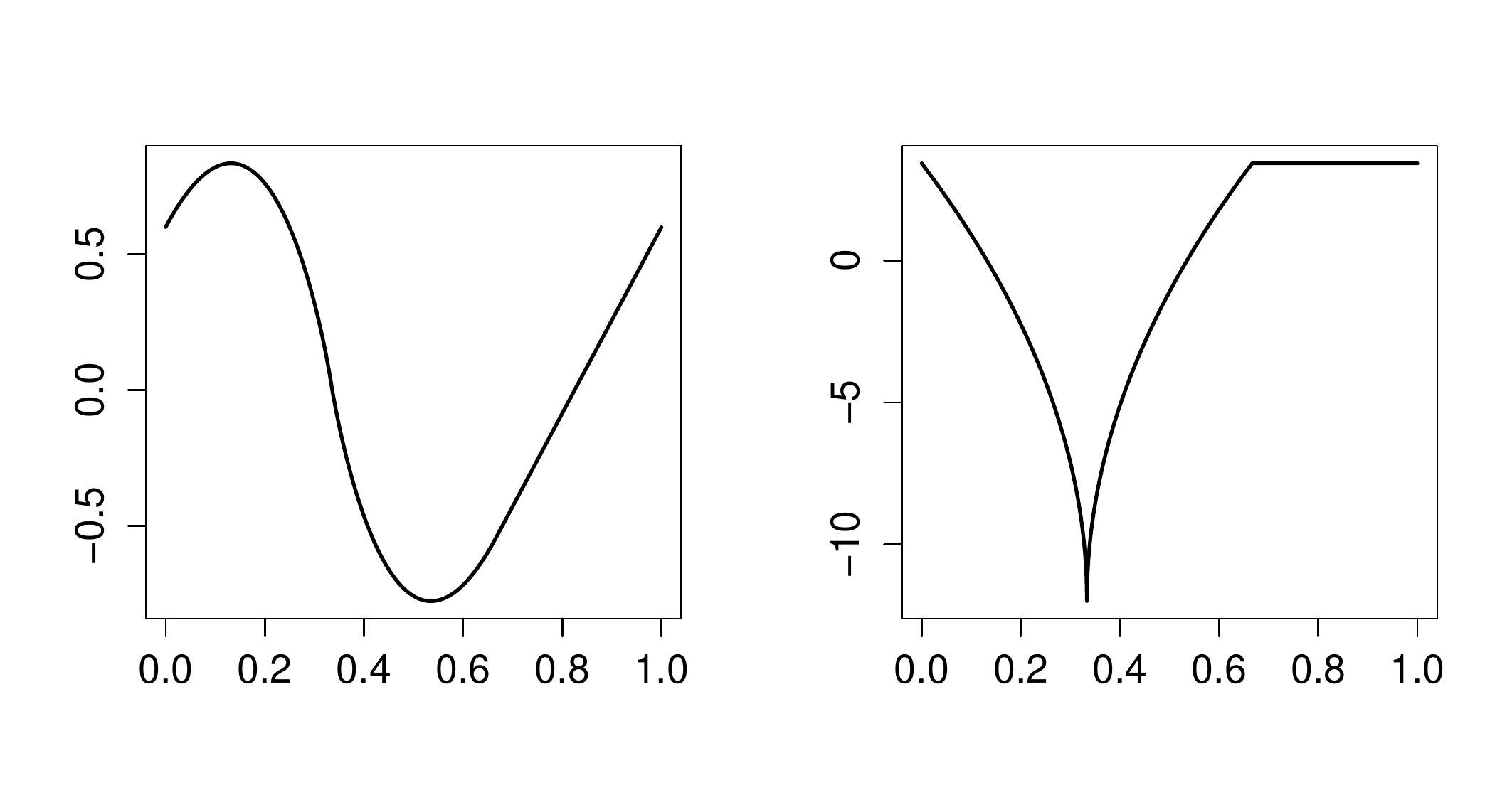}
\caption{Left: drift function. Right: derivative of drift function}\label{fig:drift}
\end{center}
\end{figure}
\begin{figure}[h]
\centering
\includegraphics[width=.8\linewidth]{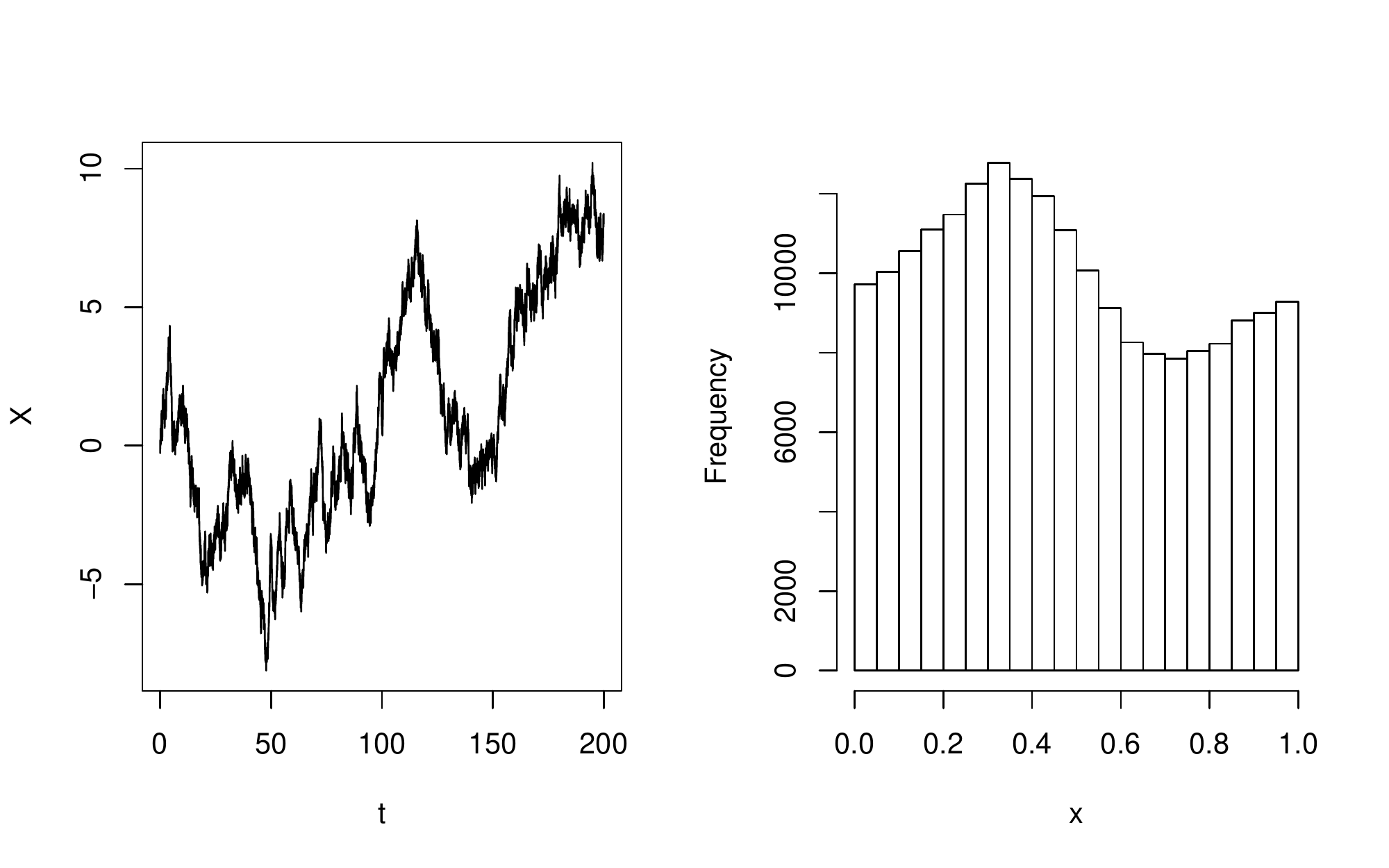}
\caption{Left: simulated data. Right: histogram of simulated data modulo $1$. }
\label{fig:data}
\end{figure}

\begin{figure}
\begin{center}
\includegraphics[width=.4\linewidth]{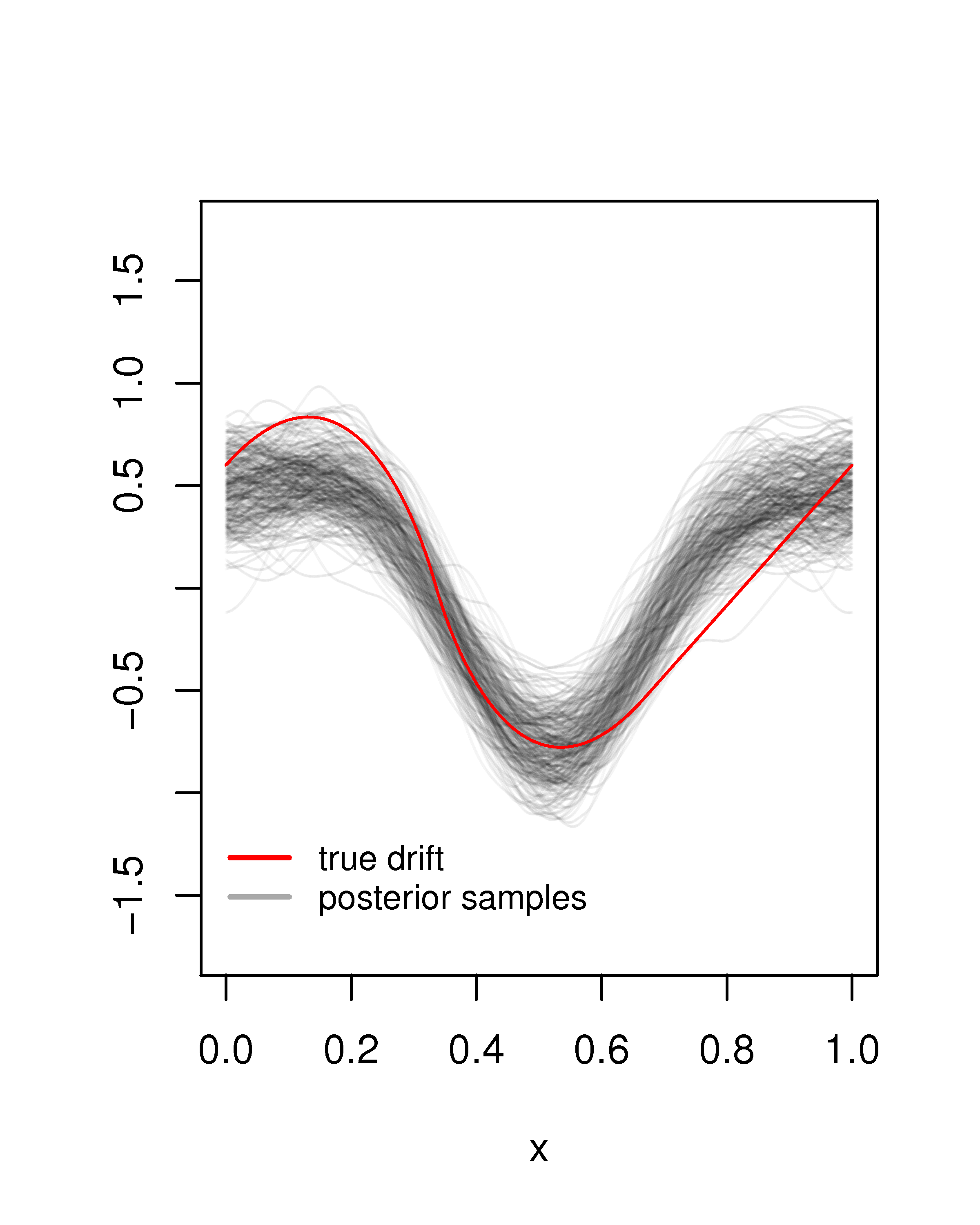}
\includegraphics[width=.4\linewidth]{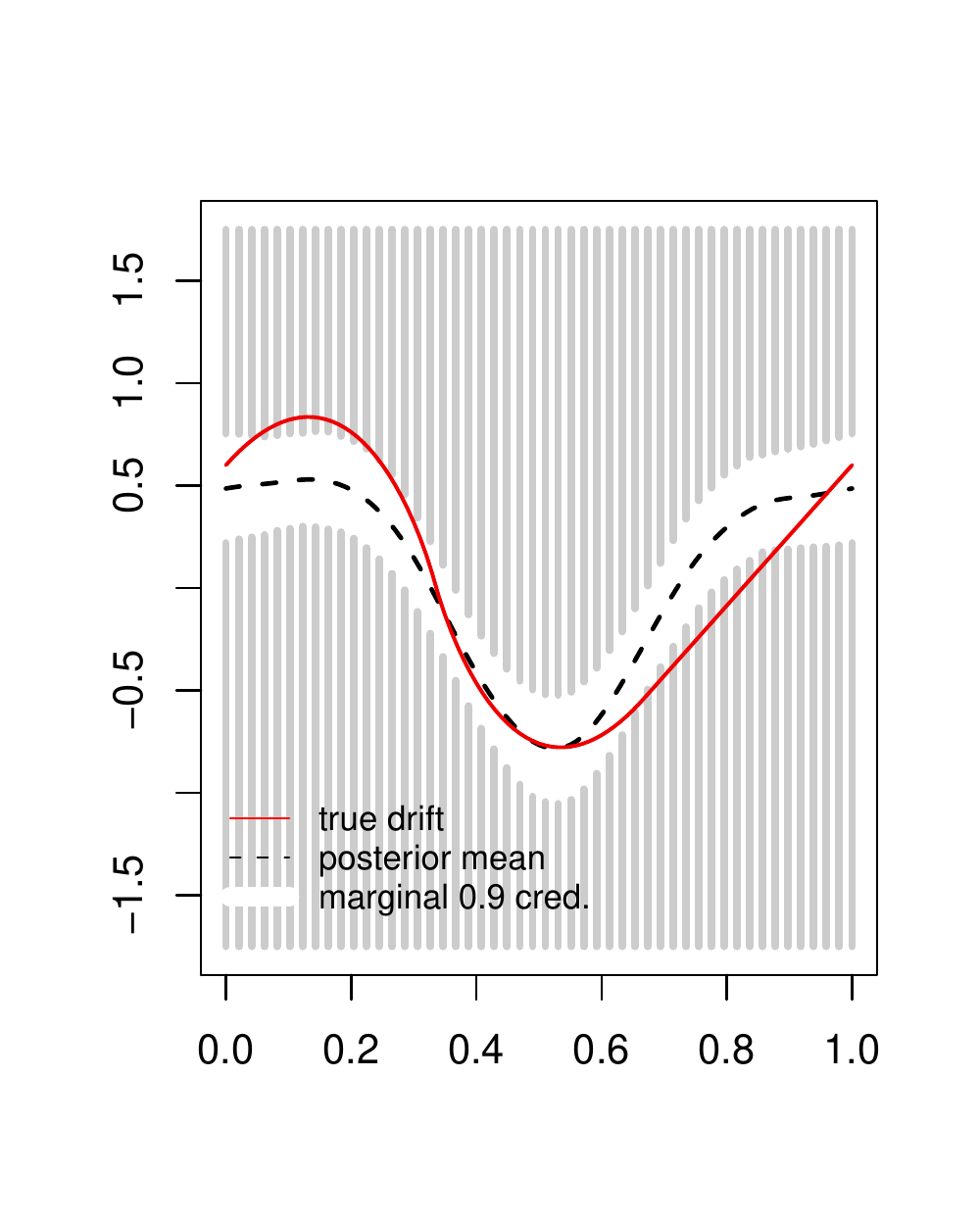}
\caption{Left: true drift function (red, solid) and samples from the posterior distribution. Right: drift function (red, solid), posterior mean (black, dashed) and $90 \%$ pointwise credible bands}\label{fig:1}
\end{center}
\end{figure}

We simulated a diffusion  on the time interval $[0,200]$ using the Euler discretization scheme with time discretization step equal to  $10^{-5}$. Next, we retained all observations at values $t=i\Delta$ with $\Delta=0.001$ and $i=0,\ldots, 200.000$.  The data are shown in Figure \ref{fig:data}. From the histogram we can see that the process spends most time near $x=1/3$, so we expect  estimates for the drift to be best in this region.  For now, we consider the data as essentially continuous time data, so no  data-augmentation scheme is employed.

We define a prior  with Fourier basis functions as described in Section \ref{sec: fourier}, choosing  regularity $\beta = 1.5$. With this choice, the regularity of the prior matches that of the true drift function.

For the reversible jump algorithm there are a few  tuning parameters.  For the model generating chain, we choose
$	q(j\mid j)=1/2$, $ q(j+1 \mid j)=q(j-1\mid j)=1/4$.
For the priors on the models we chose $C=-\log(0.95)$ which means that
$p(j)\propto (0.95)^{m_j}$ expressing weak prior belief in a course (low) level model. For the inverse Gamma prior on the scale we take the hyper parameters
$a = b = 5/2$.

We ran the continuous time algorithm for $3000$ cycles and discarded the first $500$ iterations as burn-in. We fix this number of iterations and burn-in  for all other MCMC simulations.  In Figure \ref{fig:1} we  show the resulting
posterior mean (dashed curve) and $90\%$ pointwise credible intervals (visualized by the gray bars). The posterior mean was estimated using Rao-Blackwellization (\cite{Robert}, section 4.2).
(Specifically, the posterior mean drift was not computed as the pointwise average of the drift functions $b$ sampled at each iteration.
Rather, the average of the posterior means $(W^{j'})^{-1}\mu^{j'}$ obtained at each MCMC iteration (see move II) was used.)

Insight in the mixing properties of the  Markov chain is gained by considering  traces of the sampled drift function at several fixed points as shown in Figure \ref{fig:Atrace}.
 The trace plots indicate that indeed the first 200-300 iterations should be discarded.  Plots of the visited models over time and the corresponding acceptance probabilities are shown in figures \ref{fig:Amodels} and \ref{fig:accep} respectively. The  mean and median of the scaling parameter $s^2$ are given by $1.91$ and $1.64$ respectively (computing upon first removing burn-in samples).
\begin{figure}
\centering
\includegraphics[width=.9\linewidth]{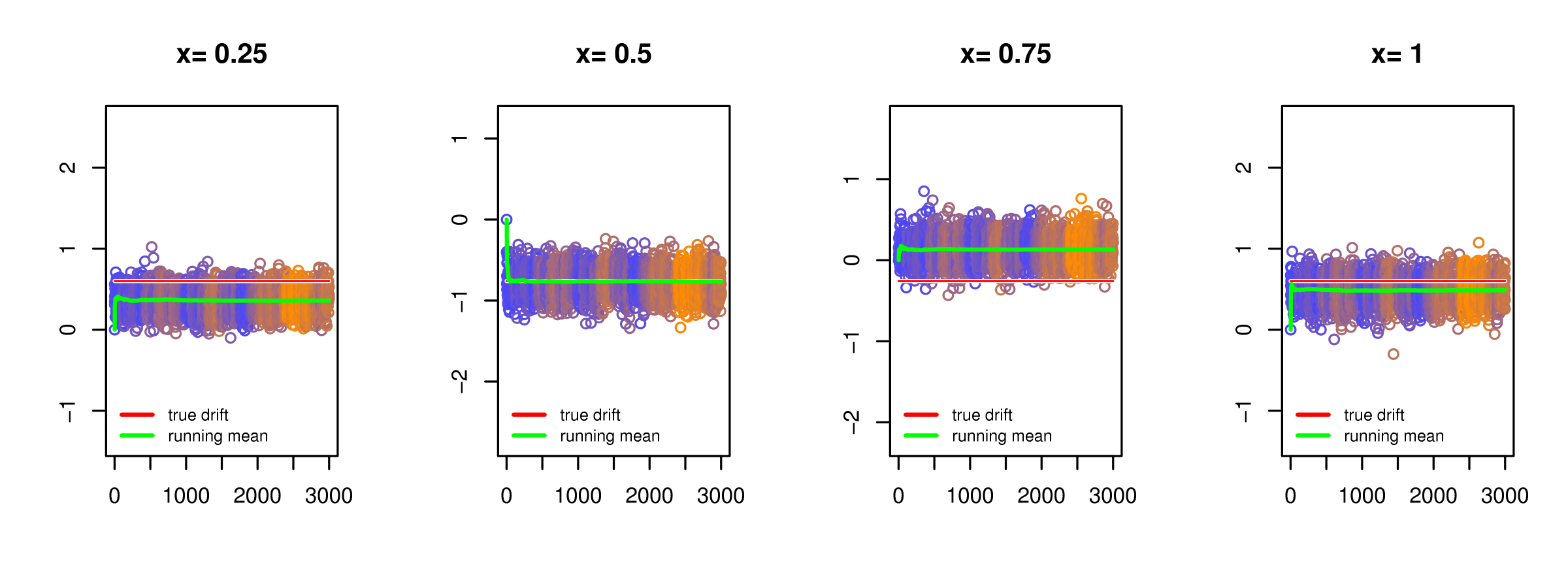}
\caption{Trace and running mean of the sampled drift at different design points. The color of the samples indicates the current model, cold colors correspond to small values of $j$.}
\label{fig:Atrace}
\end{figure}

\begin{figure}
\centering
\includegraphics[width=.7\linewidth]{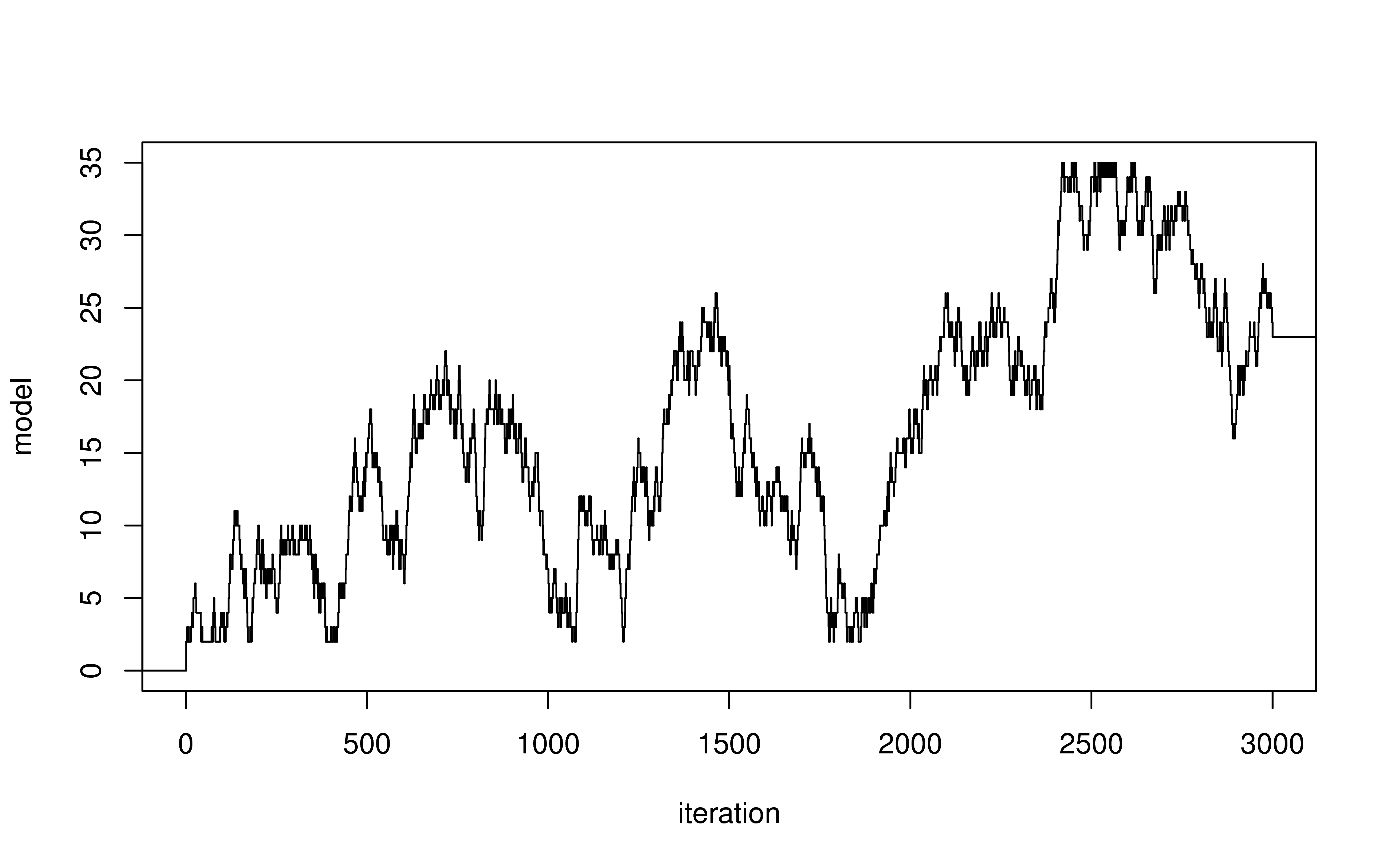}
\caption{Models visited over time.}
\label{fig:Amodels}
\end{figure}

\begin{figure}
\centering
\includegraphics[scale=0.6]{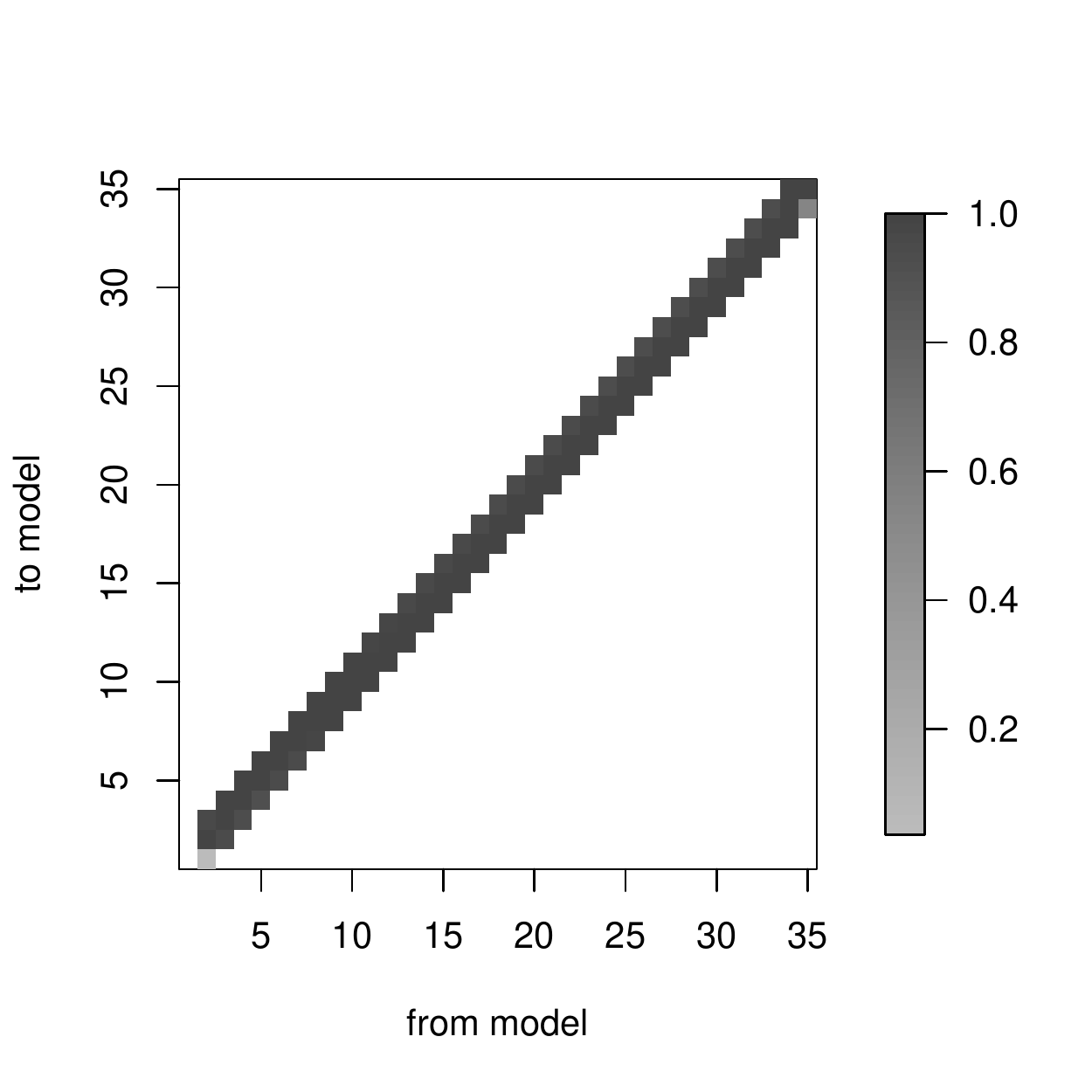}
\caption{Average acceptance probabilities for moves between models.}
\label{fig:accep}
\end{figure}

To judge the algorithm with respect to the sensitivity of $C$, we ran the same algorithm as well for  $C=0$. The latter choice is often made and  reflects equals prior belief  on all models under consideration. If $C=0$, then the chain spends more time in higher models. However, the posterior mean and pointwise credible bands are practically equal to the case $C=-\log(0.95)$.

To get a feeling for the sensitivity of the results on the choice of the hyper-parameters $a$ and $b$,  the analysis was redone for $a=b=5$. The resulting posterior mean and credible bands turned out to be indistinguishable from the case $a=b=2.5$.

Clearly, if we would have chosen an example with less data, then the influence of the priors would be  more strongly pronounced in the results.

\subsection{Effect of the prior on the model index}
If, as in the example thus far, the parameter  $\beta$ is chosen to match the regularity of the true drift,
one would expect that using a prior where the truncation point for in the series expansion of  the drift
is fixed at a high enough level,
one would  get  results comparable to those we obtained in Figure \ref{fig:1}.
 If we fix the level at $j=30$, this indeed  turns out to be the case.  The main advantage of putting a prior
 on the truncation level and using a reversible jump algorithm is an improvement in computing time.  For this example, a simulation run  for the reversible jump model took about $55\%$ of that for the  fixed dimensional model. For the reversible jump run,  the average of the (non-burn-in) samples of the model index $j$ equals $18$.

\subsection{Effect of the random scaling parameter}\label{sec:scale}
To assess the effect of including a prior on the multiplicative  scaling parameter, we ran the same simulation as before, though keeping $s^2$ fixed to either $0.25$ (too small) or $50$ (too large).
  For both cases, the posterior mean, along with $90\%$ pointwise credible bounds, is depicted in Figure \ref{fig:scalevarying}.
\begin{figure}
\begin{center}
\includegraphics[width=.3\linewidth]{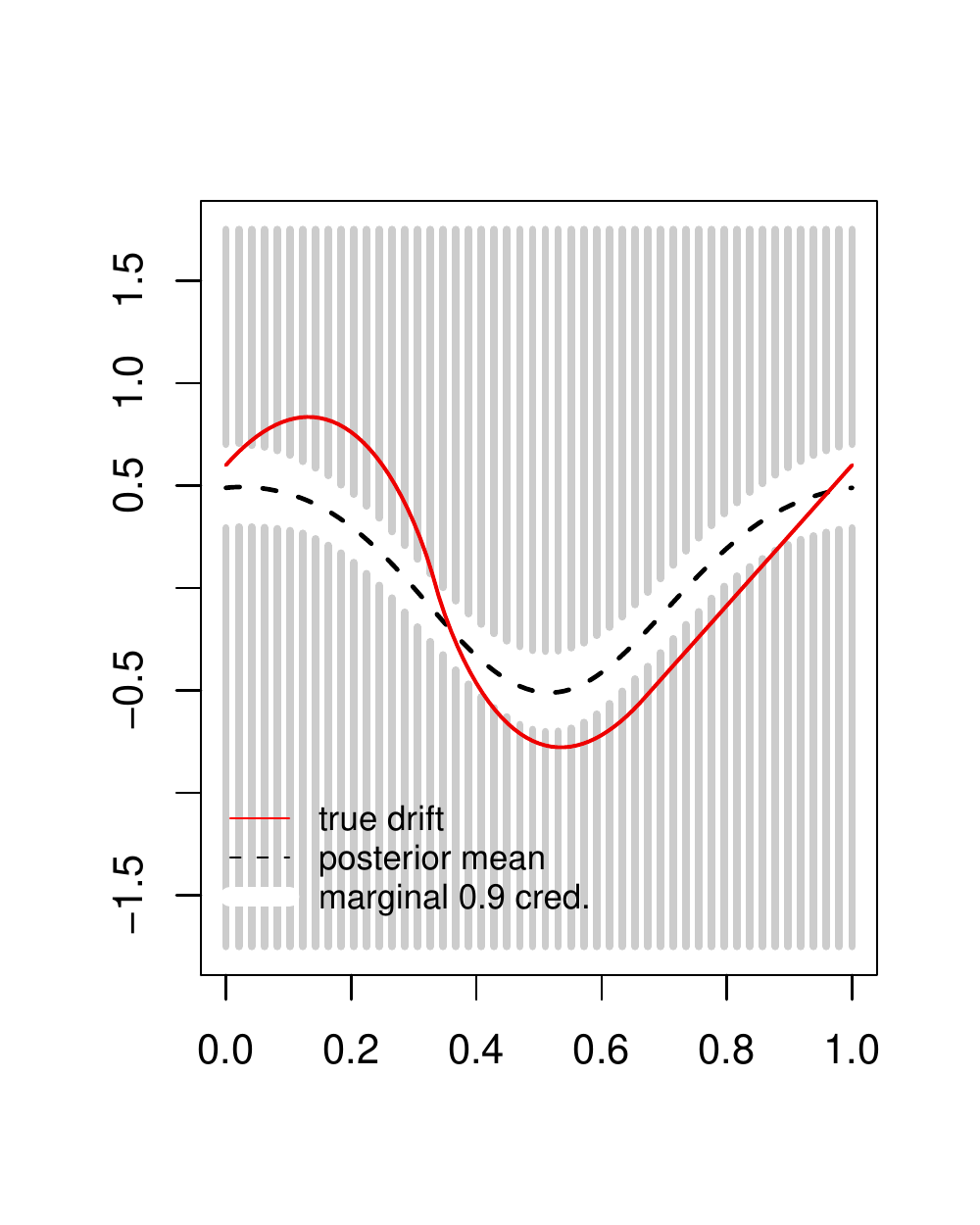}
\includegraphics[width=.3\linewidth]{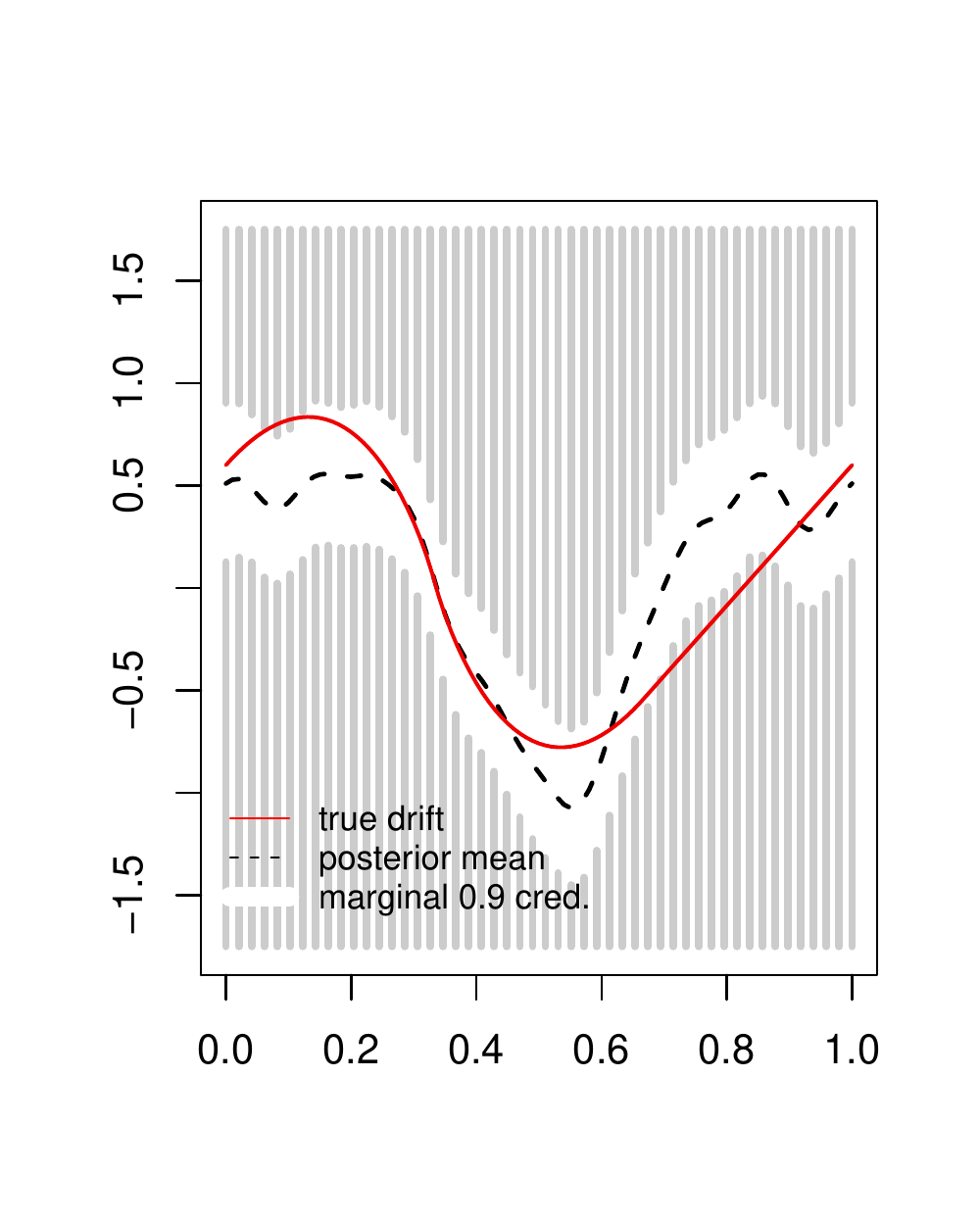}
\includegraphics[width=.3\linewidth]{images/bb-fou-1,5-periodic-cont3000final122e+05in200post.pdf}
\caption{Drift function (red, solid), posterior mean (black, dashed) and $90 \%$ pointwise credible bands. Left: $s^2=0.25$ fixed. Middle: $s^2=50$ fixed. Right: random scaling parameter (Inverse gamma prior with $a=b=2.5$). }
\label{fig:scalevarying}
\end{center}
\end{figure}
For ease of comparison, we added the right-hand-figure of Figure \ref{fig:1}.  Clearly, fixing $s^2=0.25$ results in oversmoothing.  For $s^2=50$, the credible bands are somewhat wider and suggest more fluctuations in the drift function than are actually present.

\subsection{Effect of misspecifying smoothness of the prior}
If we consider the prior without truncation, then the smoothness of the prior is essentially governed by the decay of the variances on the coefficients. This decay is determined by the value of $\beta$. Here, we investigate the effect of misspecifying $\beta$. We consider $\beta \in \{0.25, 1.5, 3\}$. In Figure  \ref{fig:effbeta_rj}  one can assess the difference in posterior mean, scaling parameter and  models visited for these three values of $\beta$.  Naturally, if $\beta$ is too large, higher level models and relatively large values for $s^2$ are chosen to make up for the overly fast decay on the variances of the coefficients. Note that the boxplots are for $\log(s^2)$, not $s^2$.

It is interesting to investigate  what happens if the   same analysis is done  without the reversible jump algorithm, thus fixing a high level truncation point. The results are in Figure \ref{fig:effbeta_fixlevel}. From this figure, it is apparent that if  $\beta$ is too small the posterior mean is very wiggly. At the other extreme, if $\beta$ is too large, we are oversmoothing and the true drift is outside the credible bands except near $x=1/3$.  As such, misspecifying $\beta$ can result in very bad estimates if a high level is fixed.
From the boxplots in Figures \ref{fig:effbeta_rj} and \ref{fig:effbeta_fixlevel} one can see that the larger $\beta$, the larger the scaling parameter. Intuitively, this makes sense. Moreover, in case we employ a reversible jump algorithm, a too small (large) value for $\beta$ is compensated for by low (high) level models.

\begin{figure}
\begin{center}
\includegraphics[width=.3\linewidth]{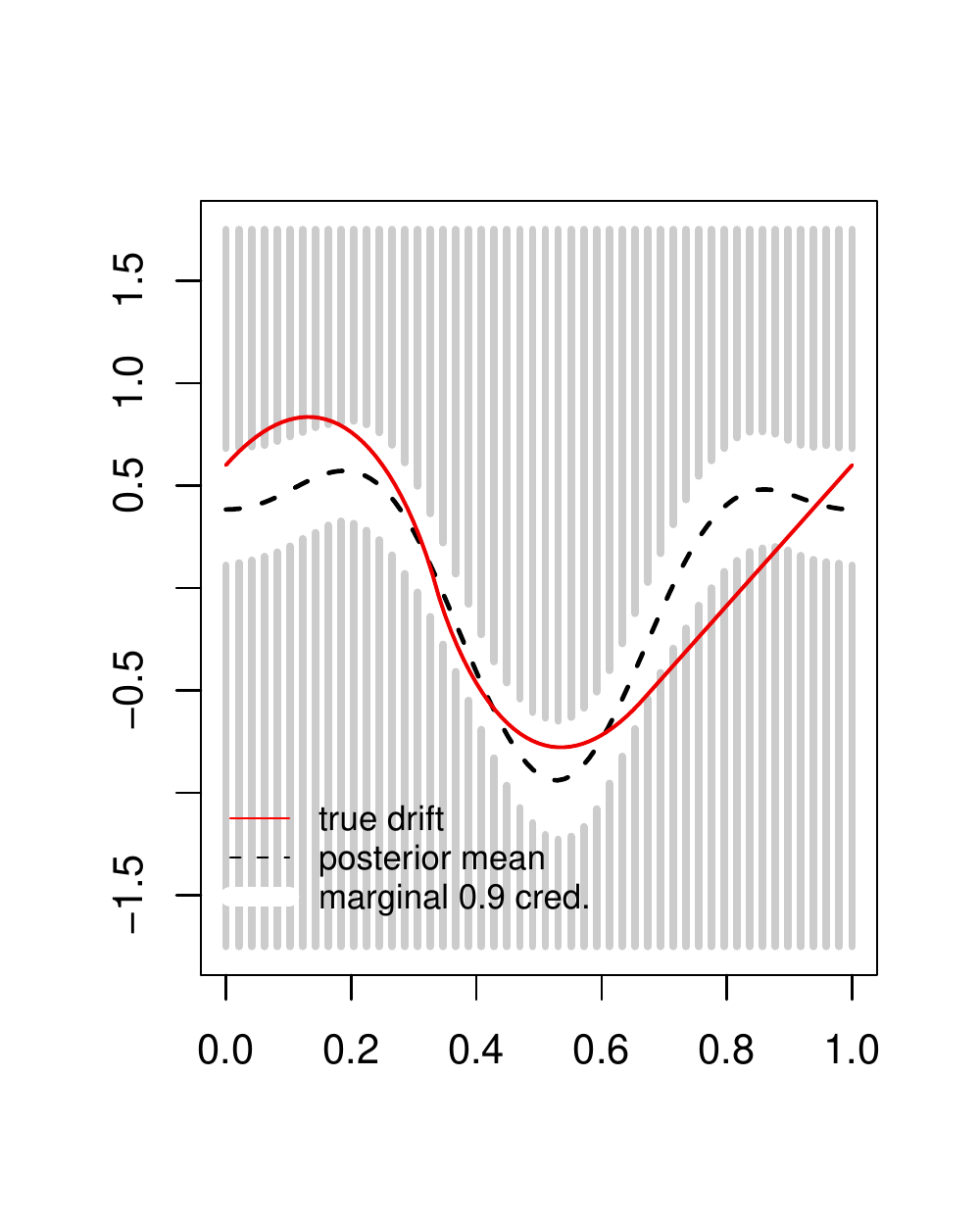}
\includegraphics[width=.3\linewidth]{images/bb-fou-1,5-periodic-cont3000final122e+05in200post}
\includegraphics[width=.3\linewidth]{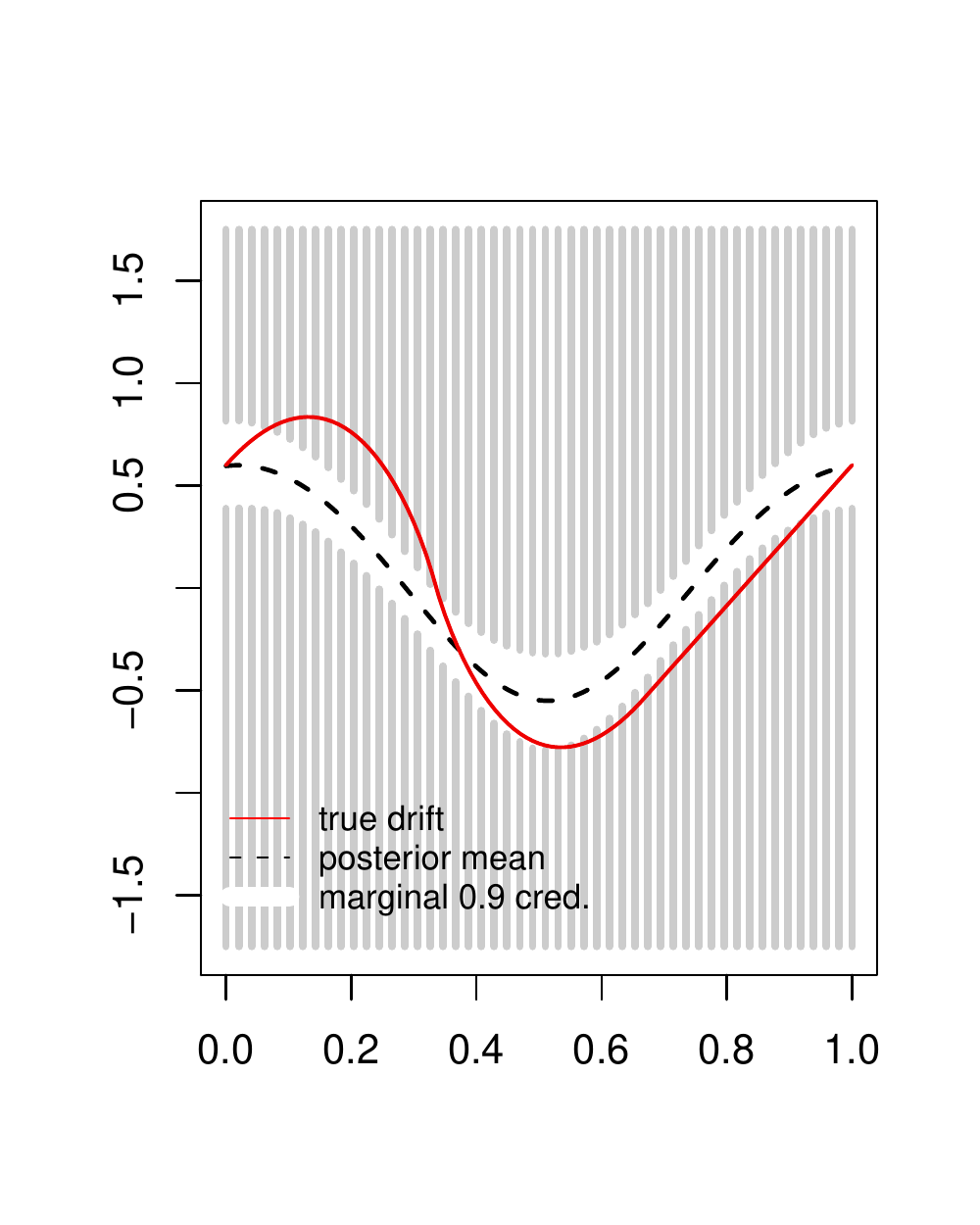}

\includegraphics[width=.8\linewidth]{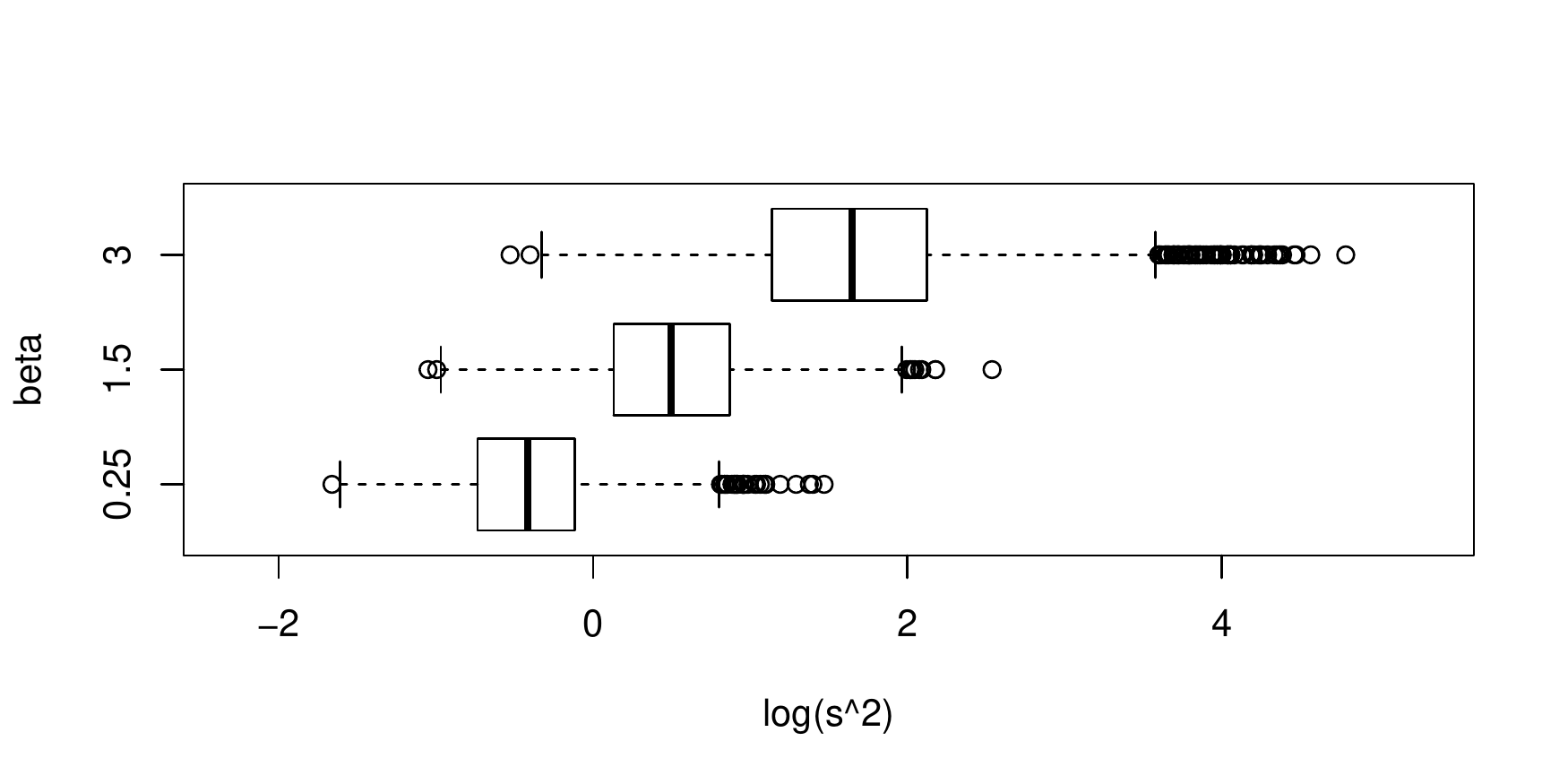}

\includegraphics[width=.3\linewidth]{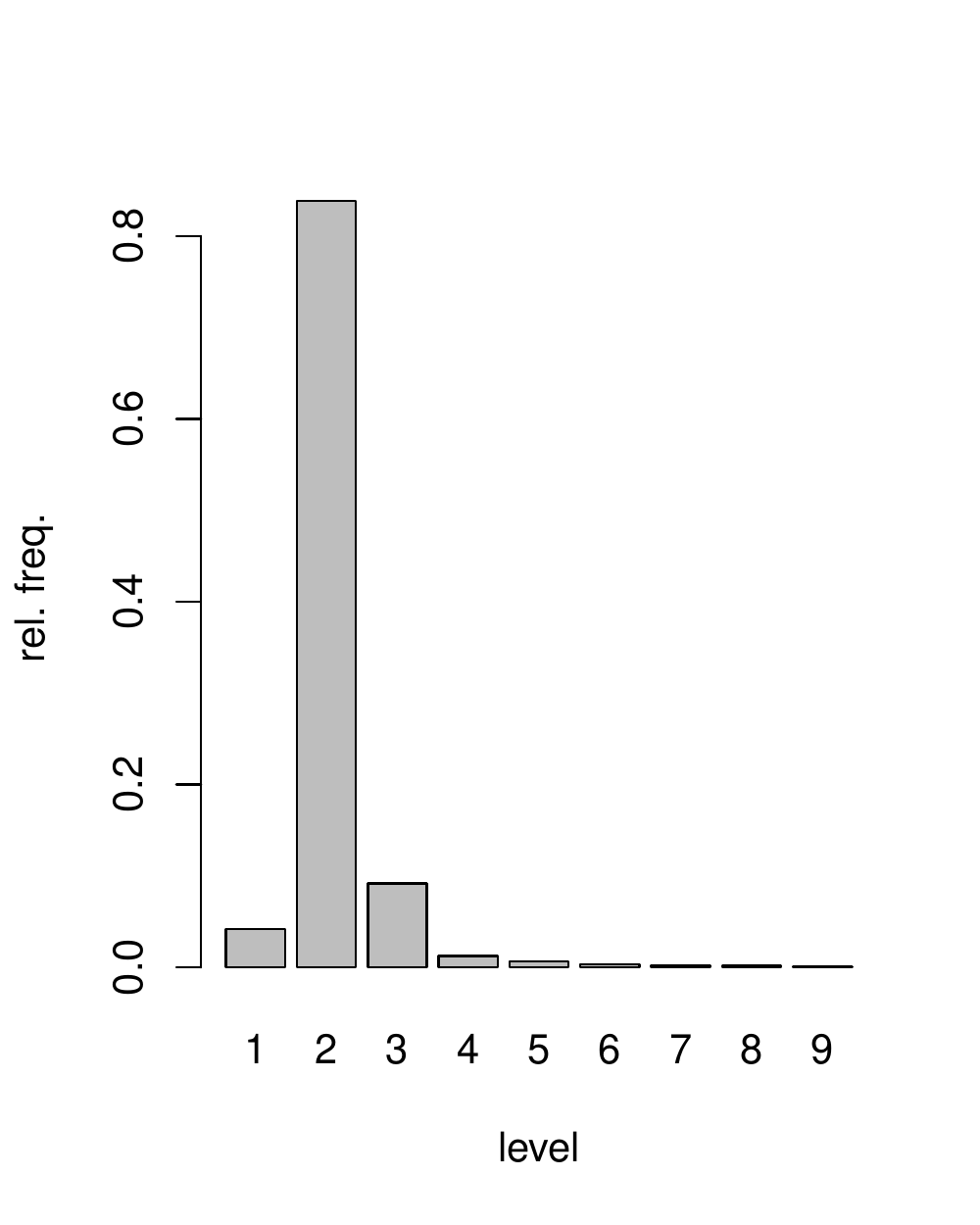}
\includegraphics[width=.3\linewidth]{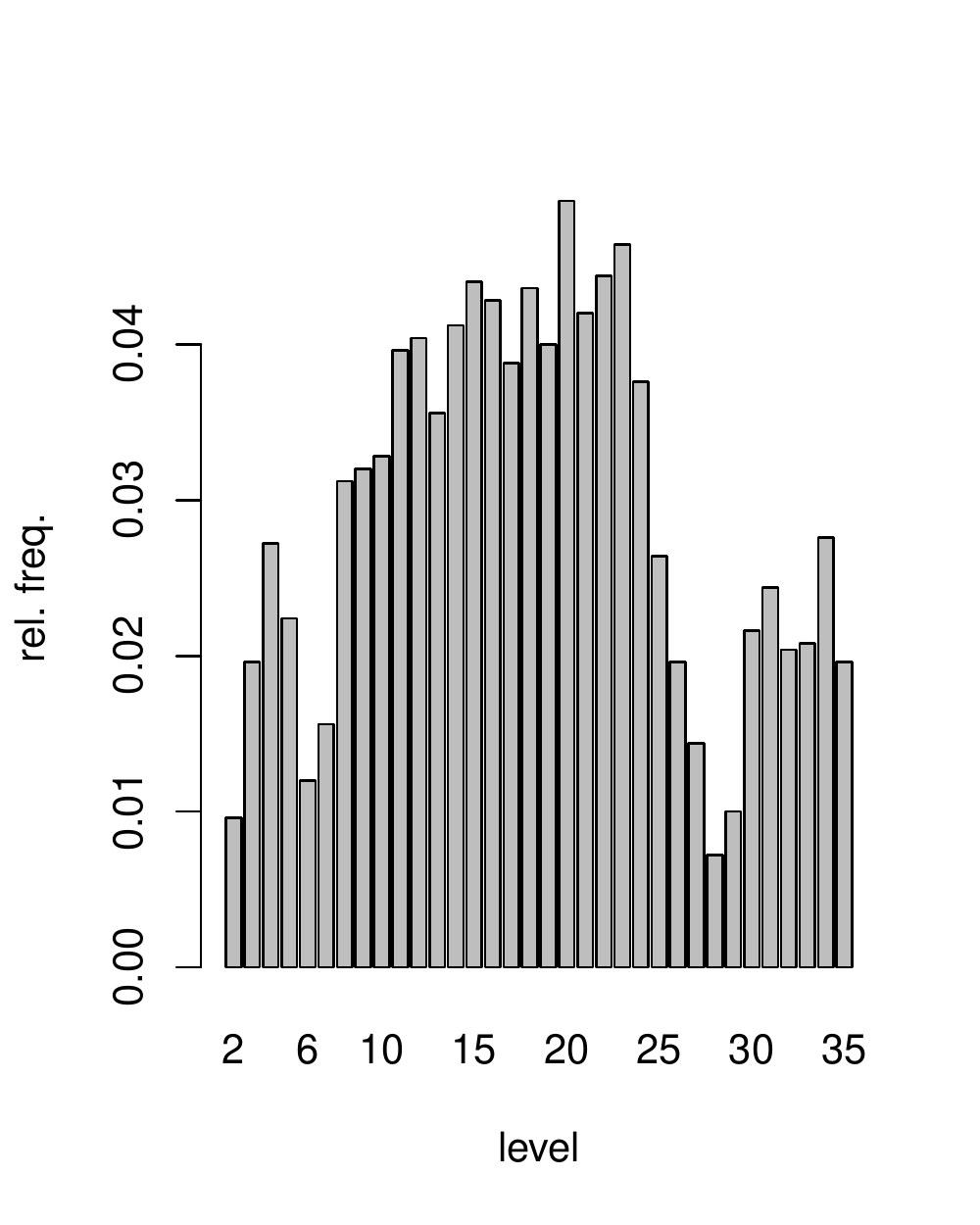}
\includegraphics[width=.3\linewidth]{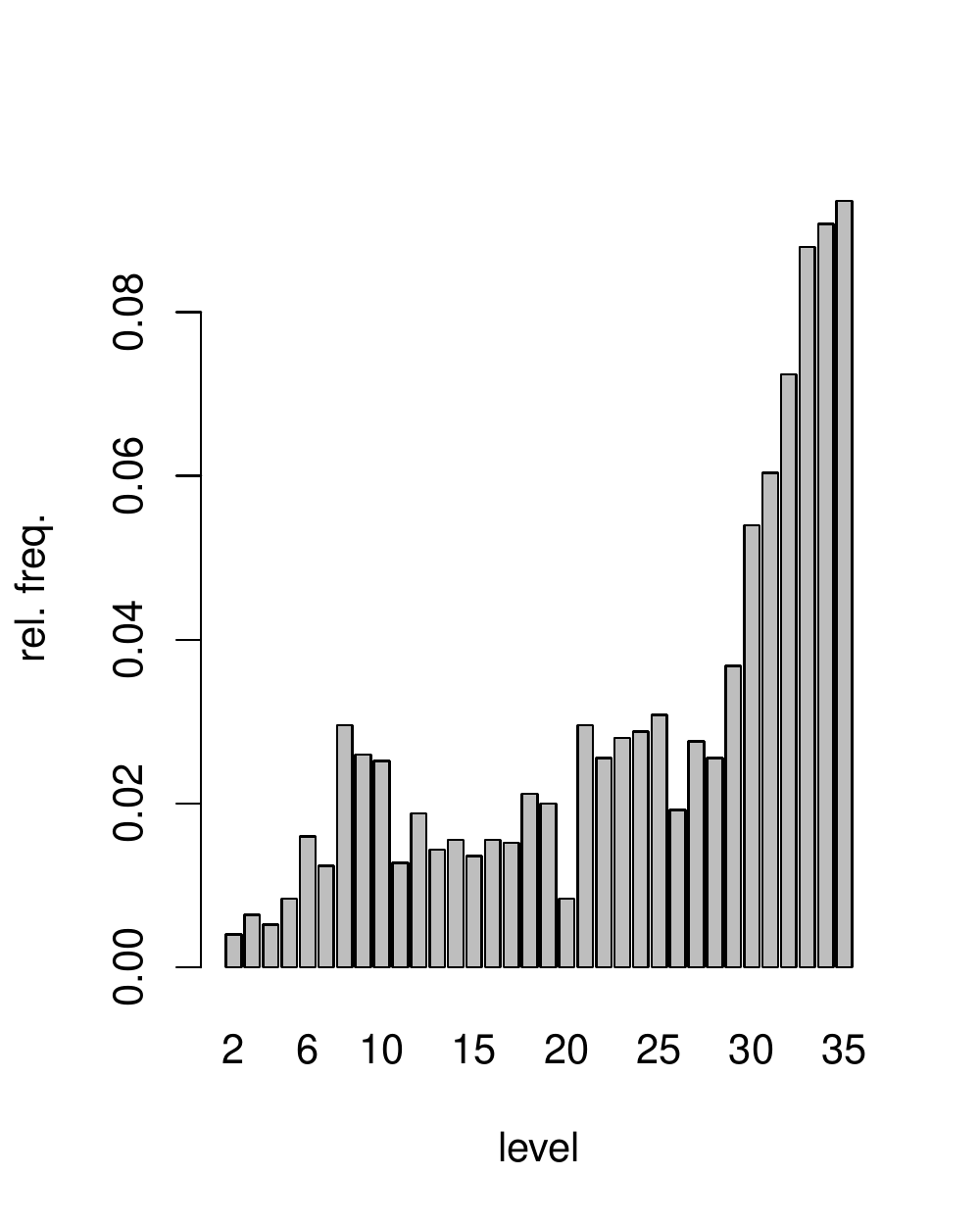}
\caption{Reversible jump; from left to right $\beta=0.25$, $\beta=1.5$ and $\beta=3$. Upper figures: posterior mean and $90 \%$ pointwise credible bands. Middle figure: boxplots of $\log(s^2)$. Lower figures: histogram of model-index. }
\label{fig:effbeta_rj}
\end{center}
\end{figure}


\begin{figure}
\begin{center}
\includegraphics[width=.3\linewidth]{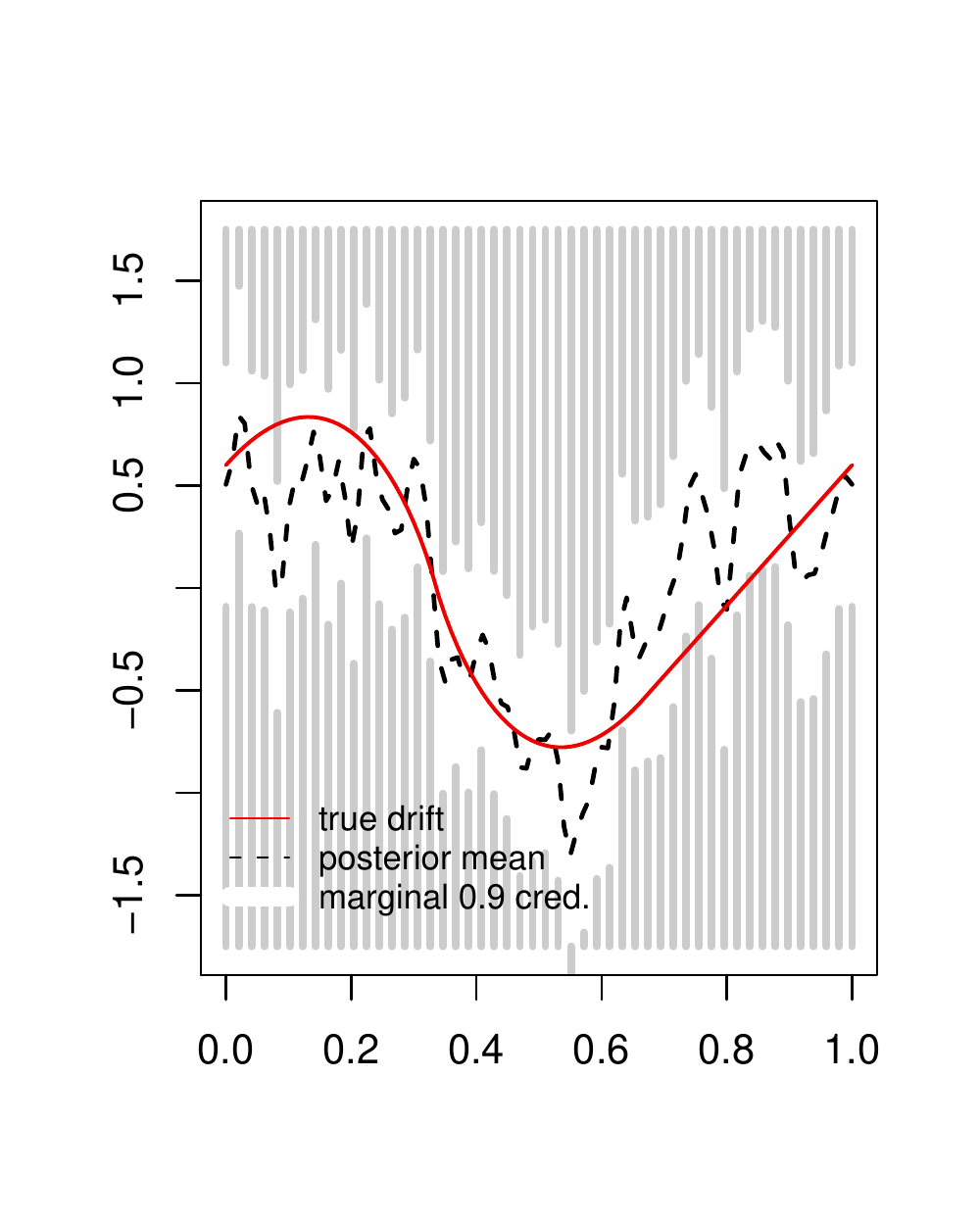}
\includegraphics[width=.3\linewidth]{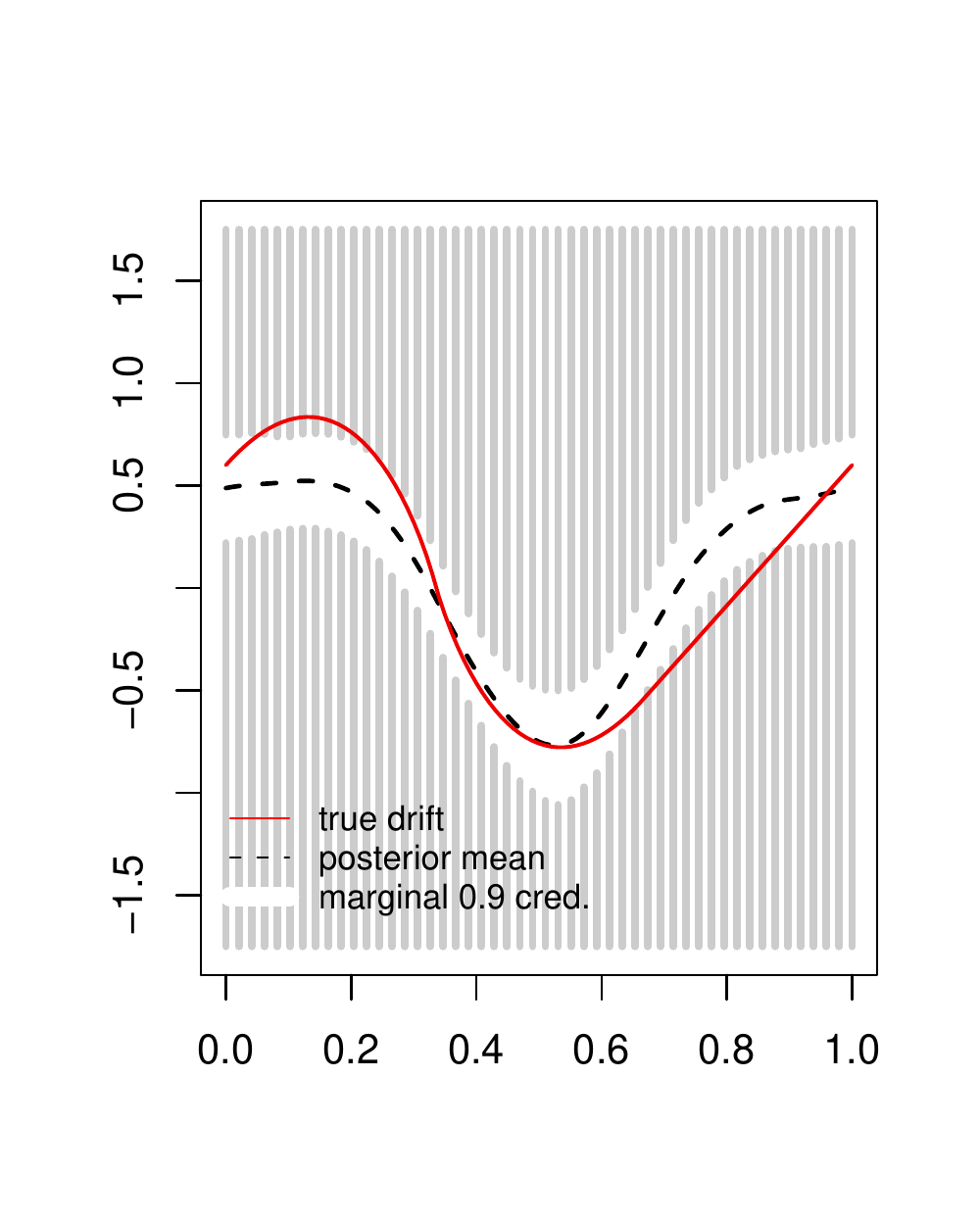}
\includegraphics[width=.3\linewidth]{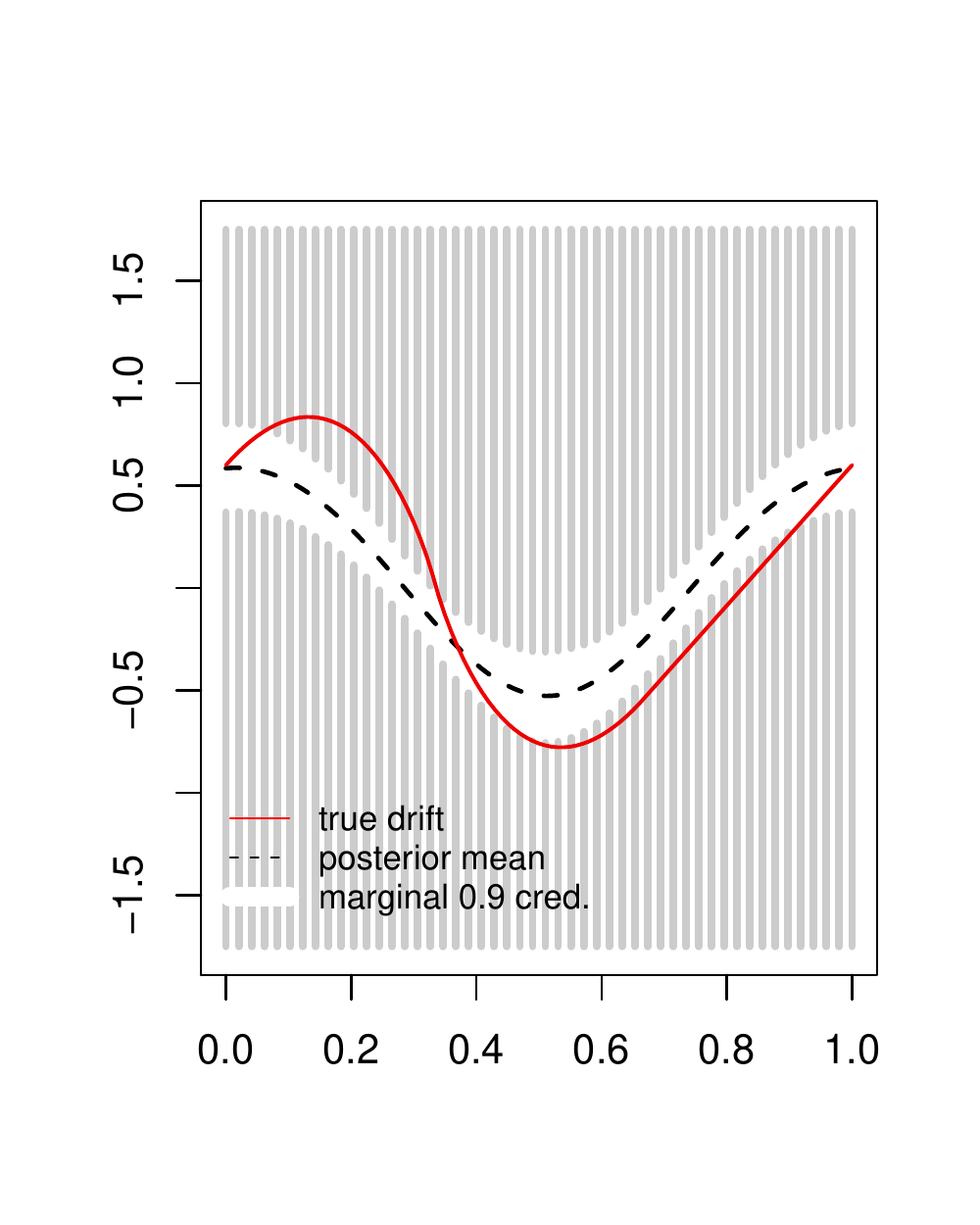}

\includegraphics[width=.8\linewidth]{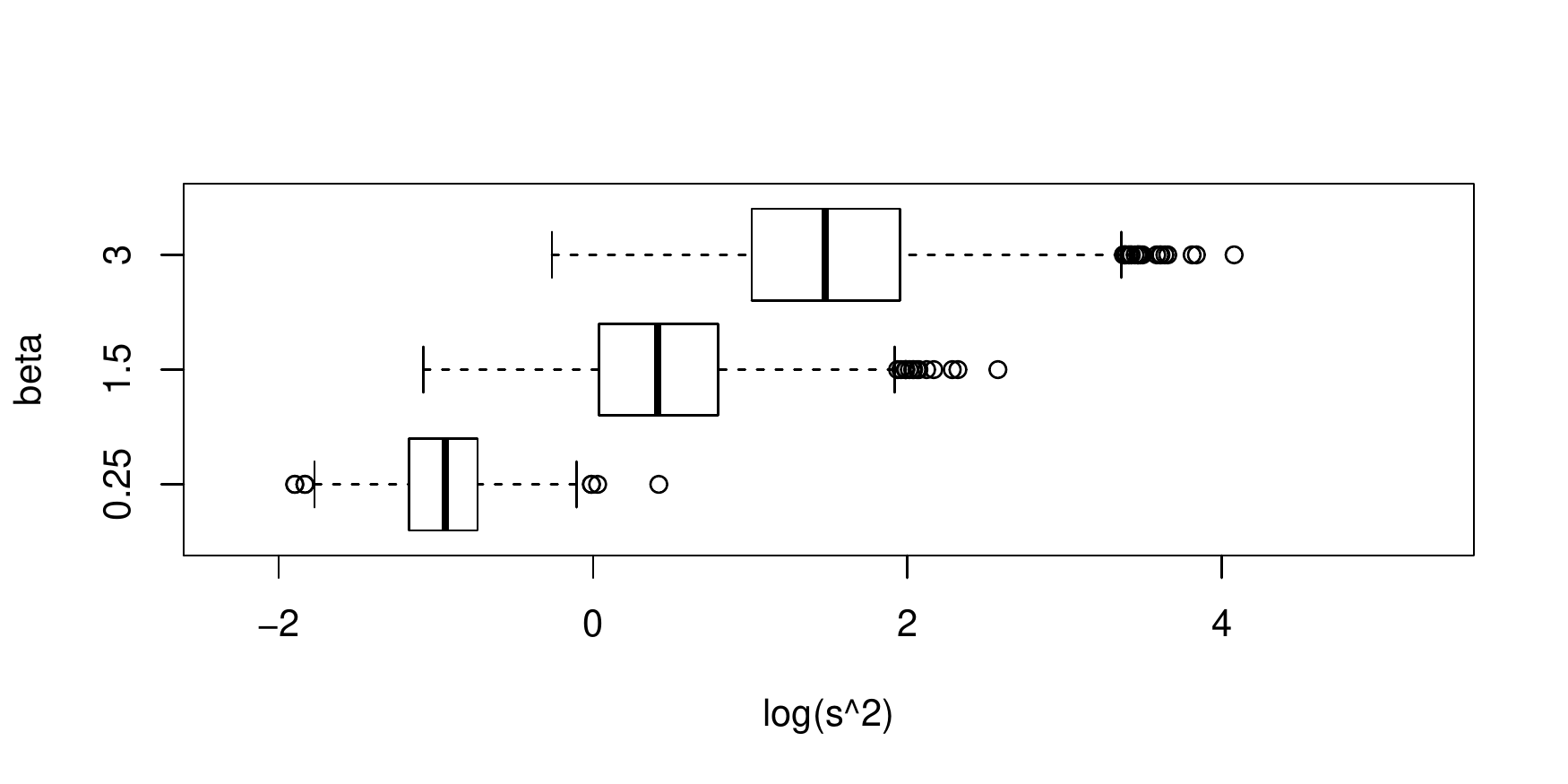}

\caption{Fixed level; from left to right $\beta=0.25$, $\beta=1.5$ and $\beta=3$. Upper figures: posterior mean and $90 \%$ pointwise credible bands. Lower figure: boxplots of $\log(s^2)$.}
\label{fig:effbeta_fixlevel}
\end{center}
\end{figure}

\subsection{Results with Schauder basis}

The complete analysis has also been done for the Schauder basis. Here we take
$q(j\mid j)=0.9$, $q(j+1 \mid j)=q(j-1\mid j)=0.1$.
The conclusions are as before. For the main example, the computing time with the Schauder basis was approximately $15\%$ of that for the Fourier basis.

\subsection{Discrete-time data and data-augmentation}
Here, we thin the ``continuous''-time data to discrete-time data by retaining every $50$th observation from the continuous-time data. The remaining data are hence at times $t=i\Delta$ with $\Delta=0.05$ and $i=0,\ldots, 4000$.  Next, we use our algorithm both with and without data-augmentation. Here, we used the Schauder basis to reduce  computing time.  The results are depicted in Figure \ref{fig:effect_da}.
\begin{figure}
\begin{center}
\includegraphics[width=.3\linewidth]{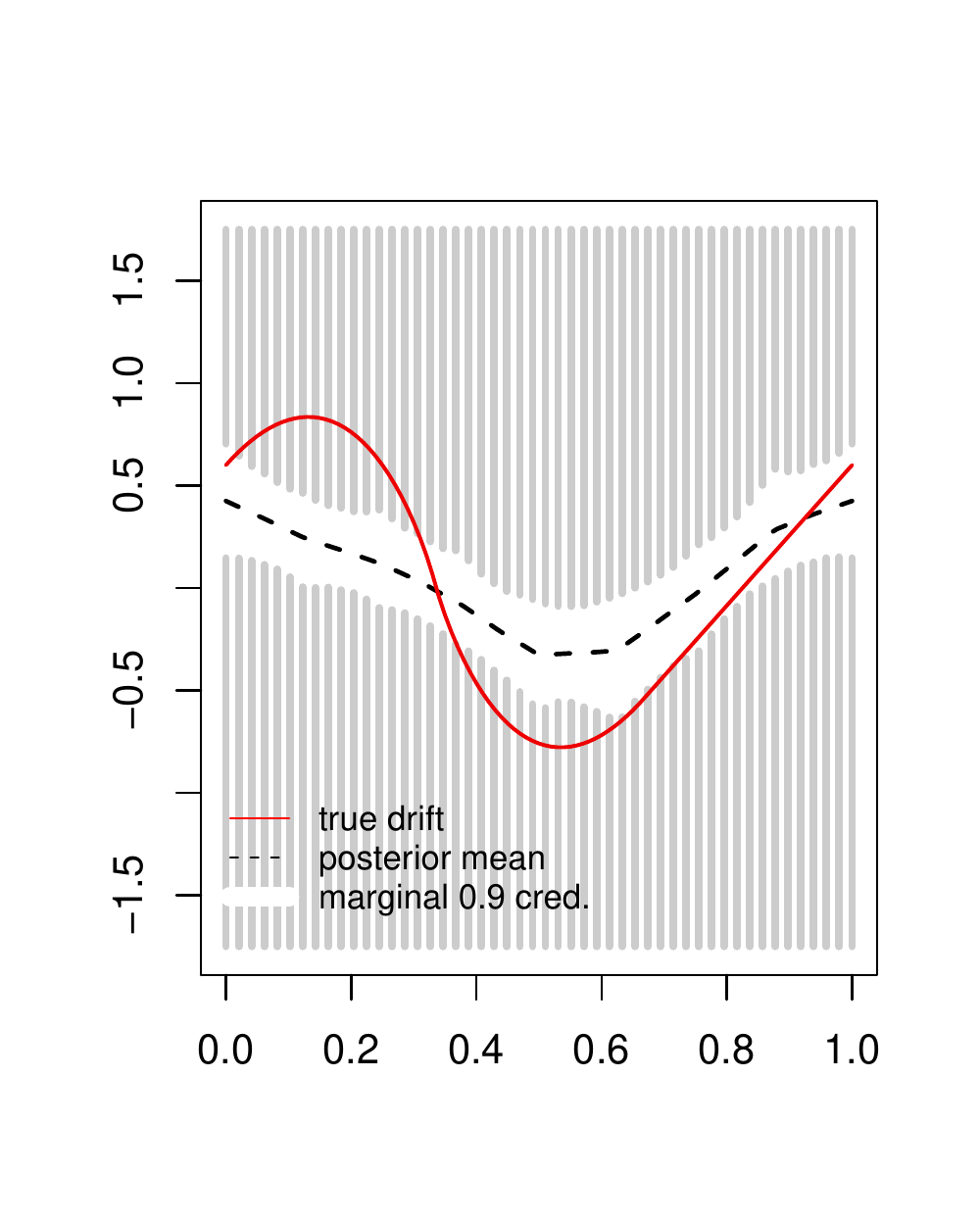}
\includegraphics[width=.3\linewidth]{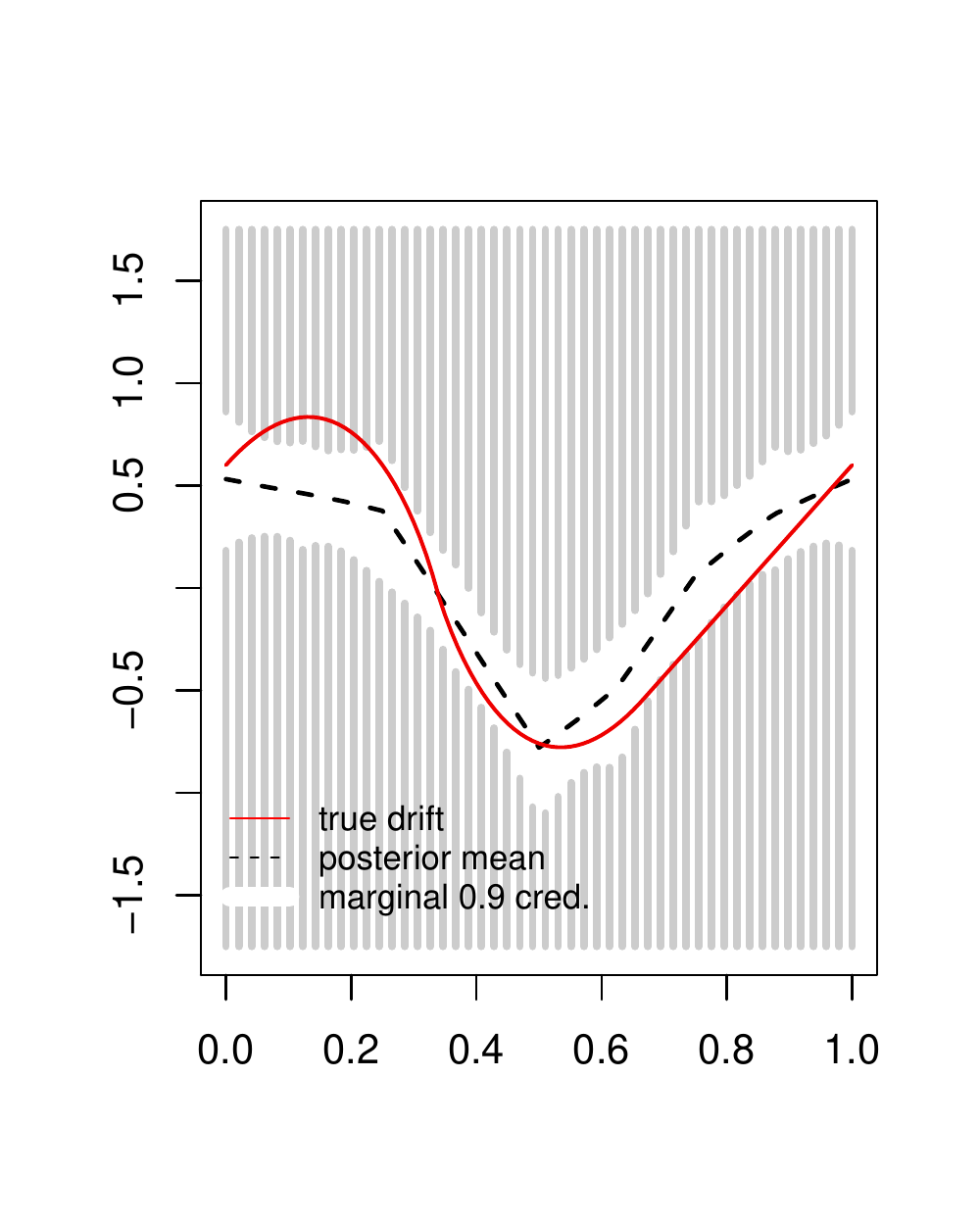}
\includegraphics[width=.3\linewidth]{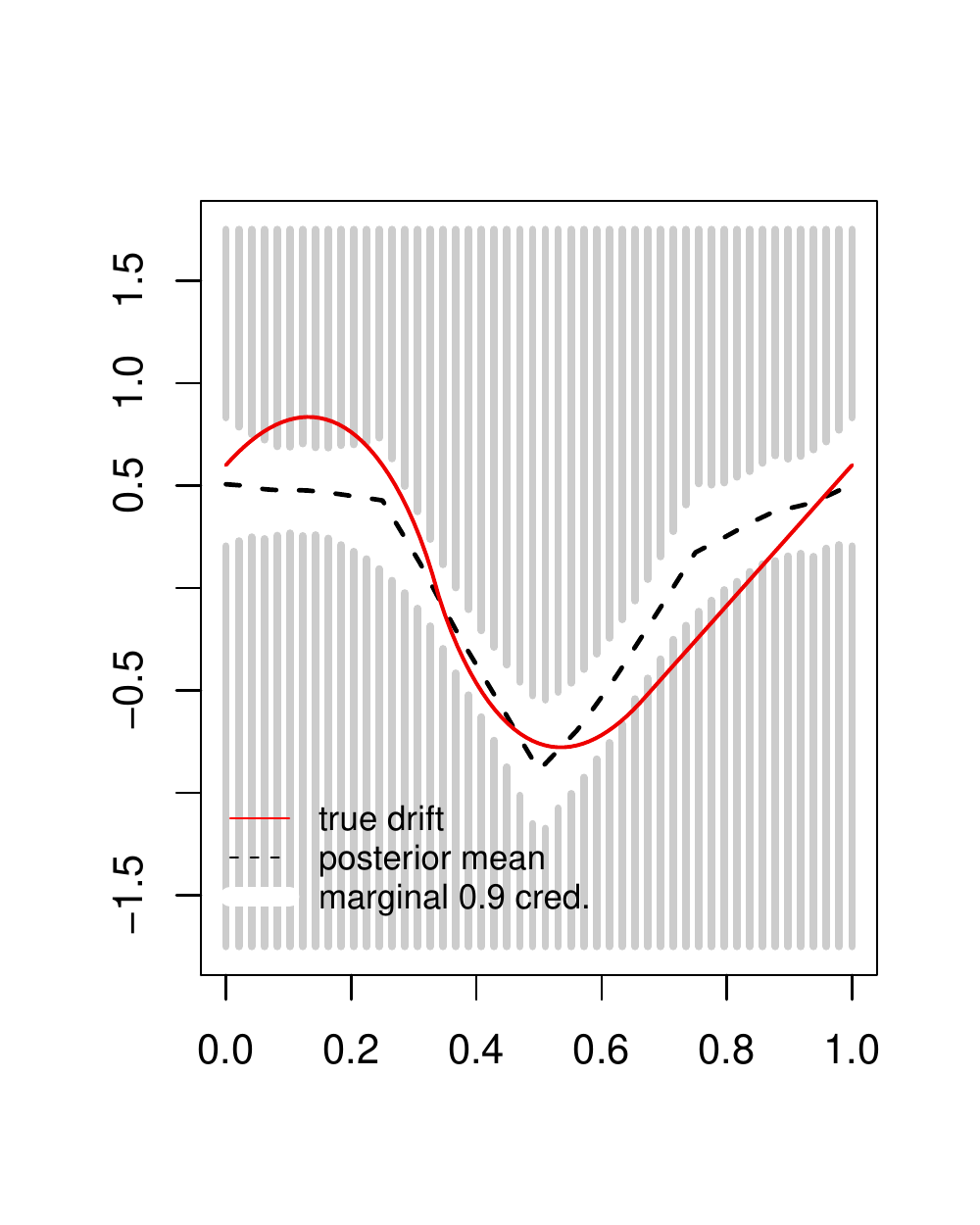}
\caption{Drift function (red, solid), posterior mean (black, dashed) and $90 \%$ pointwise credible bands. Discrete-time data.  Left: without data-augmentation.  Middle: with data-augmentation. Right: continuous-time data.}
\label{fig:effect_da}
\end{center}
\end{figure}

The leftmost plot clearly illustrates that the discrete-time data with $\Delta=0.5$ cannot be treated as continuous-time data. The bias is quite apparent. Comparing the middle and the rightmost plot shows that data-augmentation works very well in this example.

We also looked at the effect of varying $T$ (observation time horizon) and $\Delta$ (time in between discrete time observations).
In all plots of Figure \ref{fig:vary} we used the Schauder basis with $\beta=1.5$ and data augmentation with $49$ extra imputed points in between two observations.
In the upper plots of Figure \ref{fig:vary} we varied the observation time horizon while keeping $\Delta=0.1$ fixed. In the lower plots of Figure \ref{fig:vary} we fixed $T=500$ and varied $\Delta$. As expected, we see that the size of the credible
bands decreases as the amount of information in the data grows.

Lastly, Figure \ref{fig:vary3} illustrates the influence of increasing the number of augmented observations on the mixing of the chain.
Here we took $\Delta = 0.2$ and $T=500$ and compare trace plots for two different choices of the number of augmented data points, in one case $25$ data points per observation and in the second case $100$ data points per observations. The mixing does not seem to deteriorate with a higher number of augmented observations.

\begin{figure}
\begin{center}
\includegraphics[width=.25\linewidth]{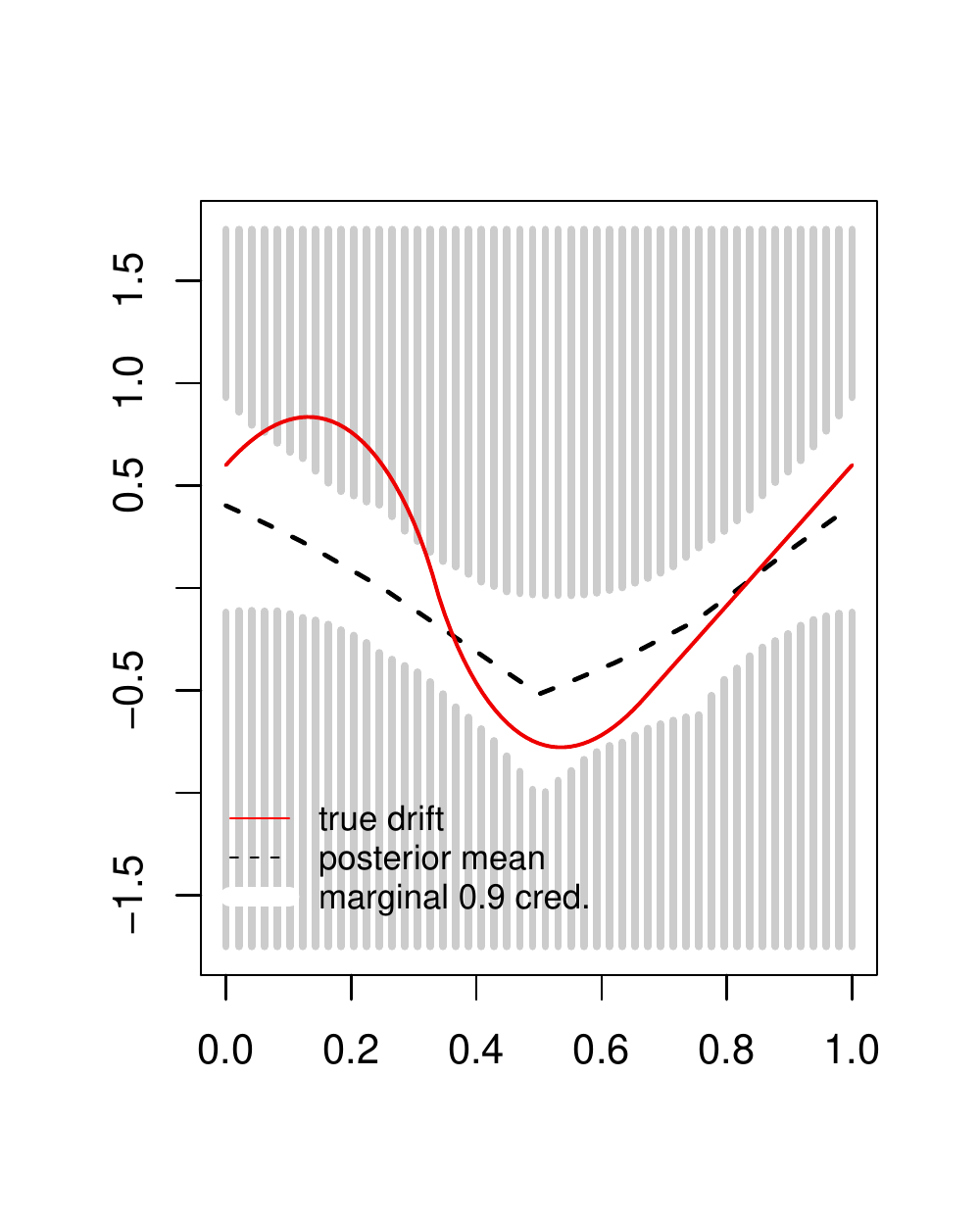}
\includegraphics[width=.25\linewidth]{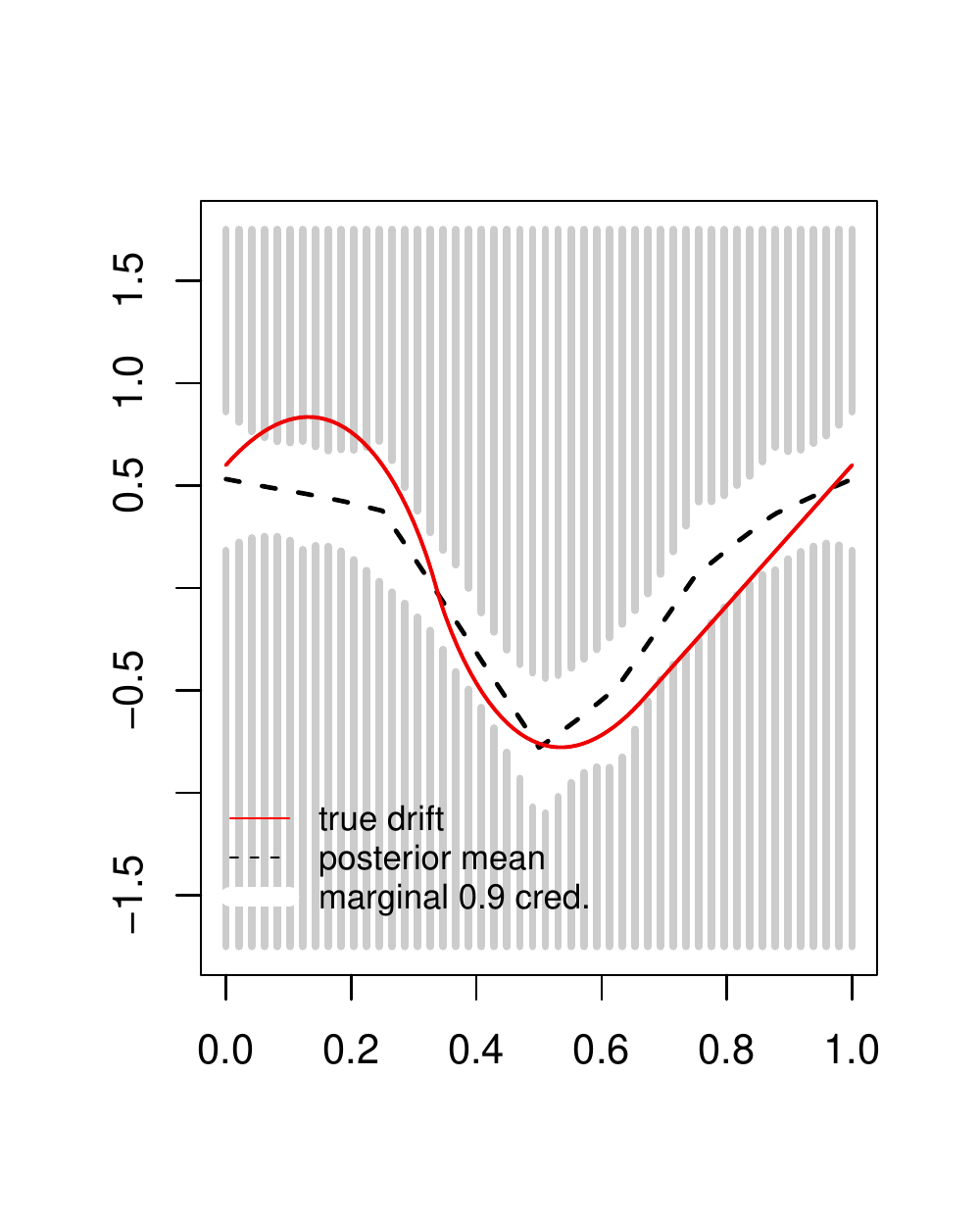}
\includegraphics[width=.25\linewidth]{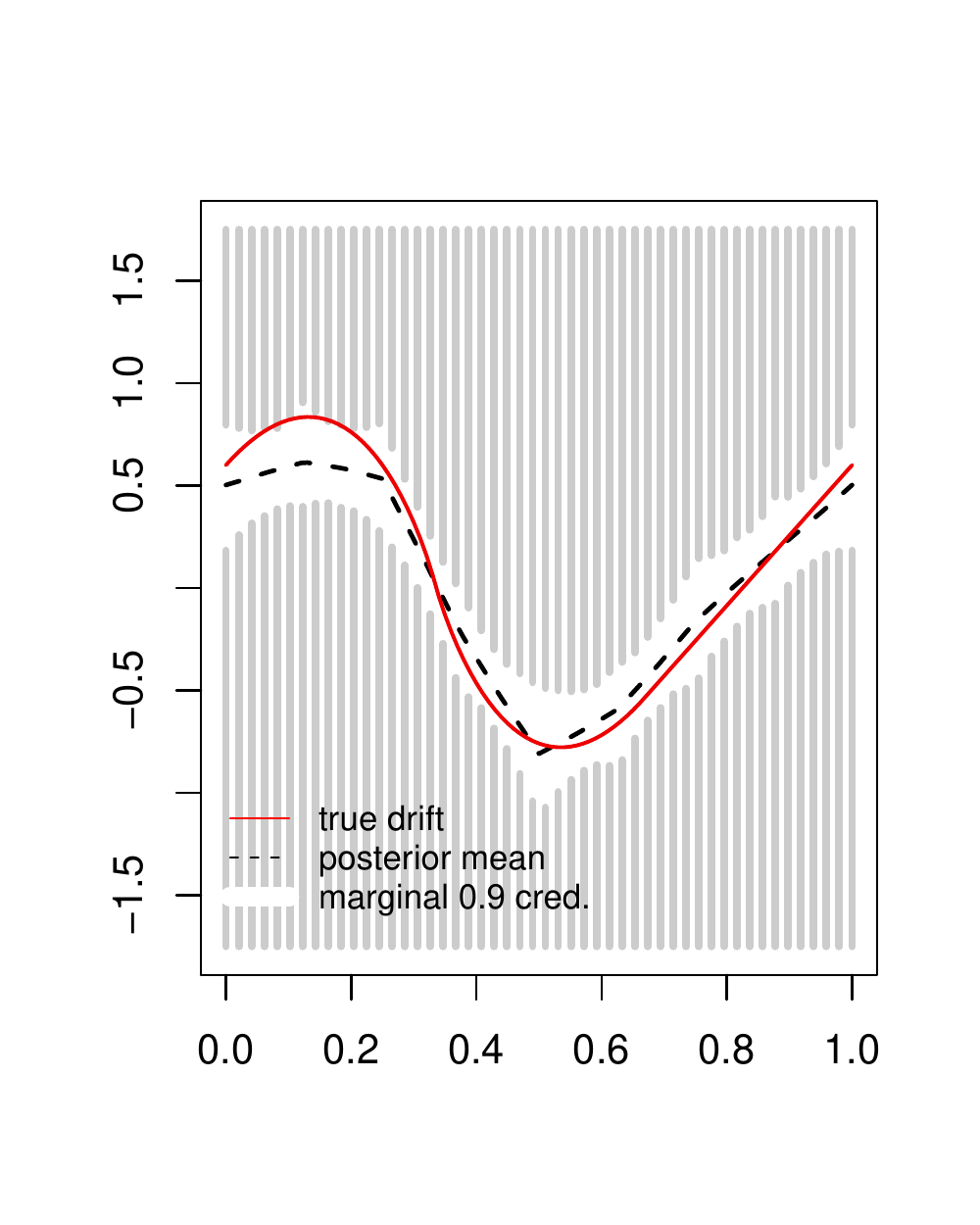}

\includegraphics[width=.25\linewidth]{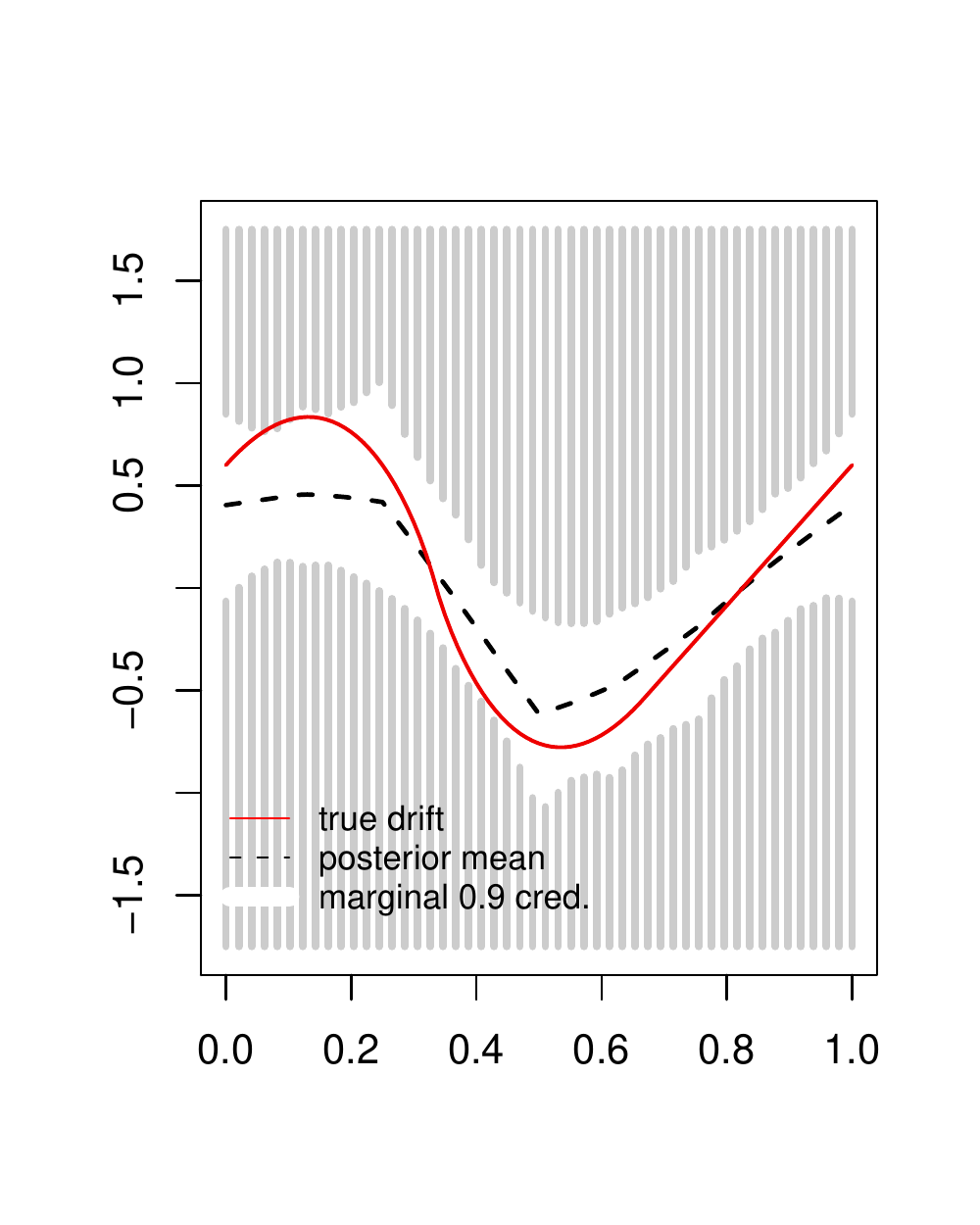}\includegraphics[width=.25\linewidth]{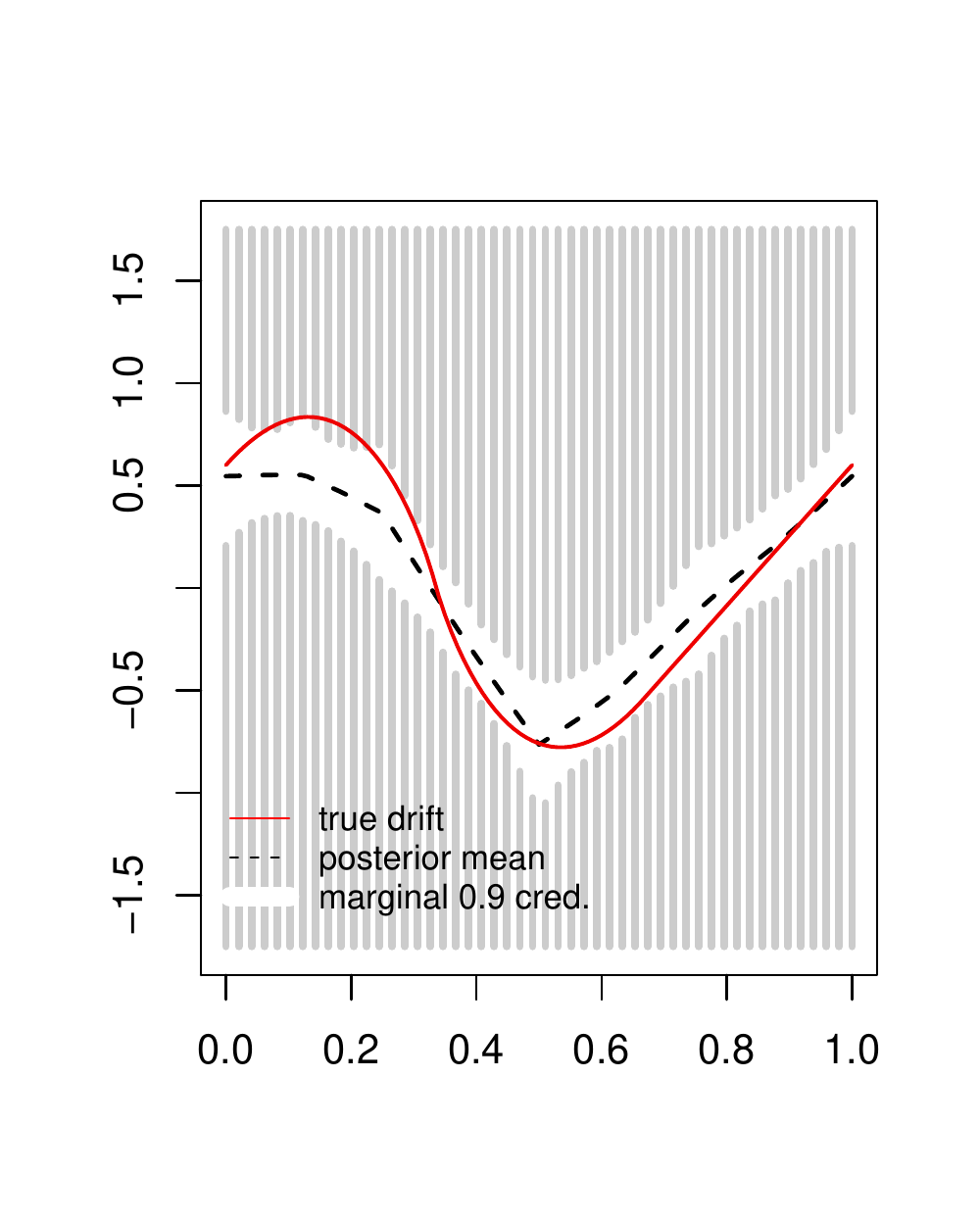}
\includegraphics[width=.25\linewidth]{images/bb-sch-1,5-periodic-aug-493000final5000in500post.pdf}\\

\caption{Drift function (red, solid), posterior mean (black, dashed) and $90 \%$ pointwise credible bands.
Upper: Discrete time observations with $\Delta=0.1$. From left to right $T=50$, $T=200$ and $T=500$. Lower: Discrete time observations with $T=500$. From left to right $\Delta=1$, $\Delta=0.2$ and $\Delta=0.1$. (All augmented to $\delta = 0.002$.) }
\label{fig:vary}
\end{center}
\end{figure}

\begin{figure}
\begin{center}
\includegraphics[width=.80\linewidth]{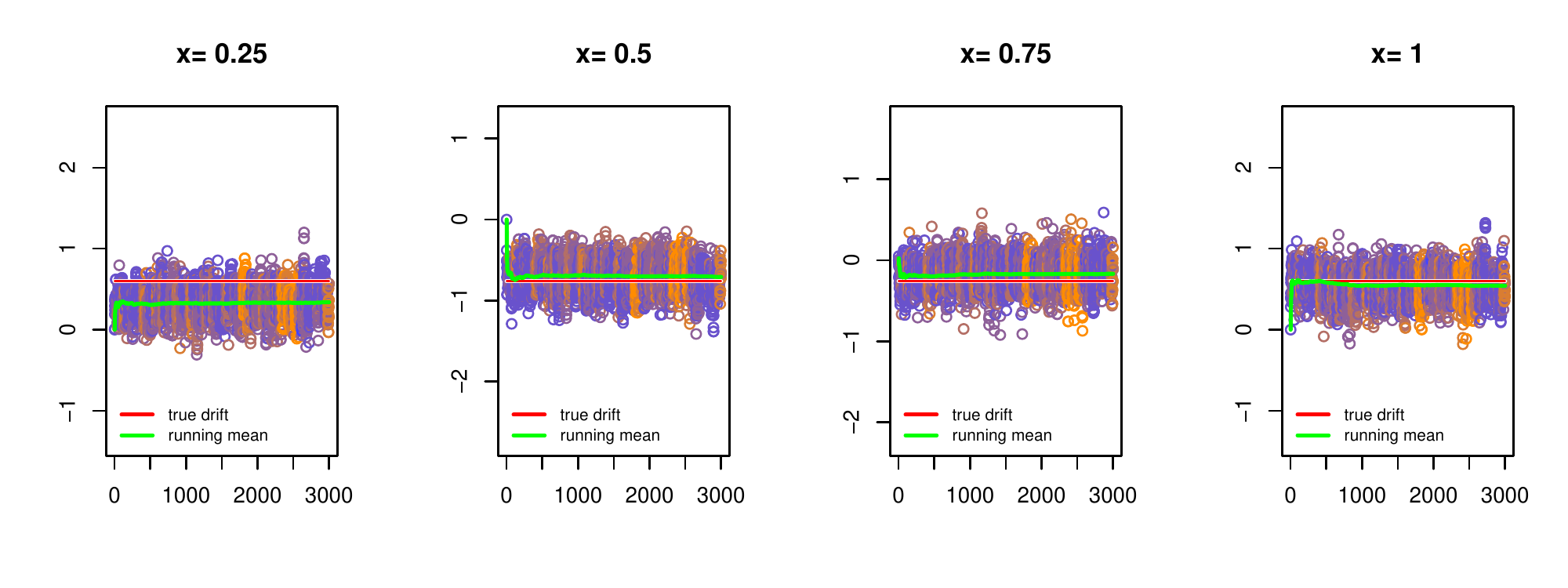}
\includegraphics[width=.80\linewidth]{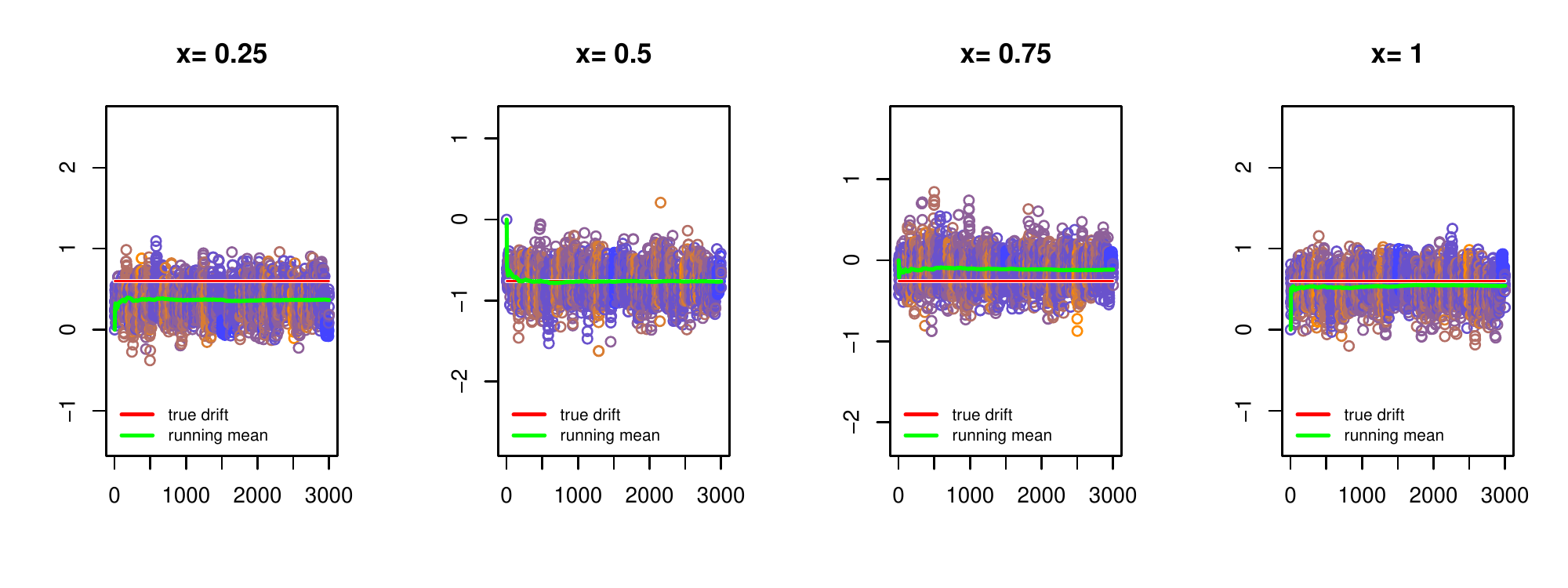}
\caption{Trace plots illustrating the influence data augmentation on the mixing of the chain. 2500 observations in [0,500]. Top: $25$ data points per observation. Bottom: $100$ data points per observation.}\label{fig:vary3}
\end{center}
\end{figure}

\subsection{Performance of the method for various drift functions}\label{sec:gallery}
In this section we investigate how different features of the drift function influence the results of our method.
The drift functions chosen for the comparison are
\begin{enumerate}
\item
$b_1(x) =  8\sin(4\pi x)$,
\item
$b_2(x) = 200 \tilde x (1-2\tilde x)^3 \ind{[0,\frac12)} (\tilde x)  - \frac{400}{3} (1-\tilde x)(2\tilde x-1)^3\ind{[\frac12,1)}(\tilde x)$, where $\tilde x = x {\mod{}} 1$,
\item
$b_3(x) = -8\sin(\pi (4x-1)) \ind{[\frac14, \frac34]}(x {\mod{}} 1)$.
\end{enumerate}
As a prior we took the Fourier basis with parameter $\beta=1.5$. For $s^2$ we took an inverse Gamma prior with hyper parameters $a = b = 5/2$.  For each drift function, $2000$ observations with  $\Delta = 0.1$ were generated. The algorithm was then used with data augmentation with $49$ imputed points extra in between two observations. The results for $b_1$, $b_2$ and $b_3$ are in figure
\ref{fig:gallery}. We ran the analysis for the Schauder basis as well, which led to very similar results.

\begin{figure}
\begin{center}
\includegraphics[width=.3\linewidth]{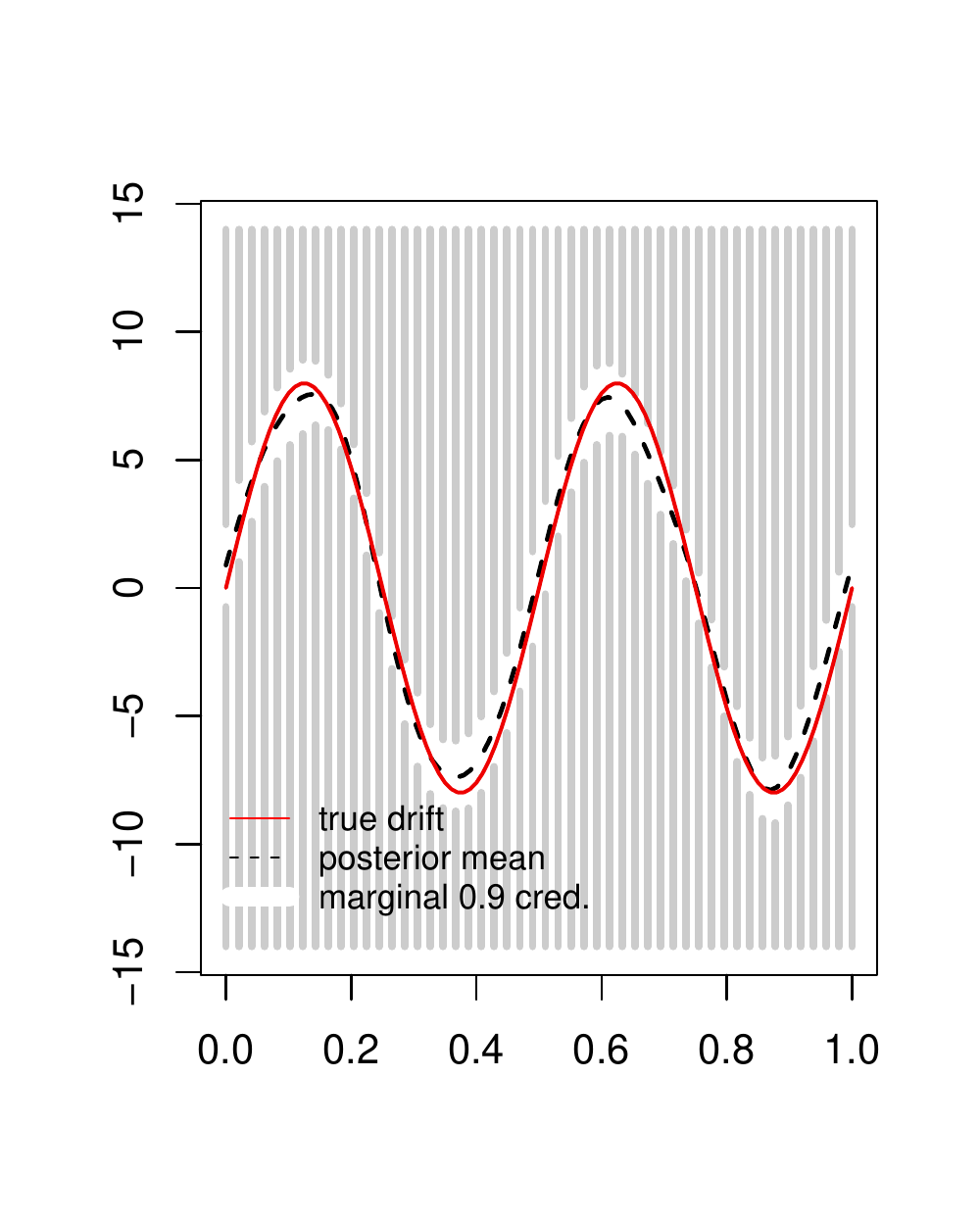}
\includegraphics[width=.3\linewidth]{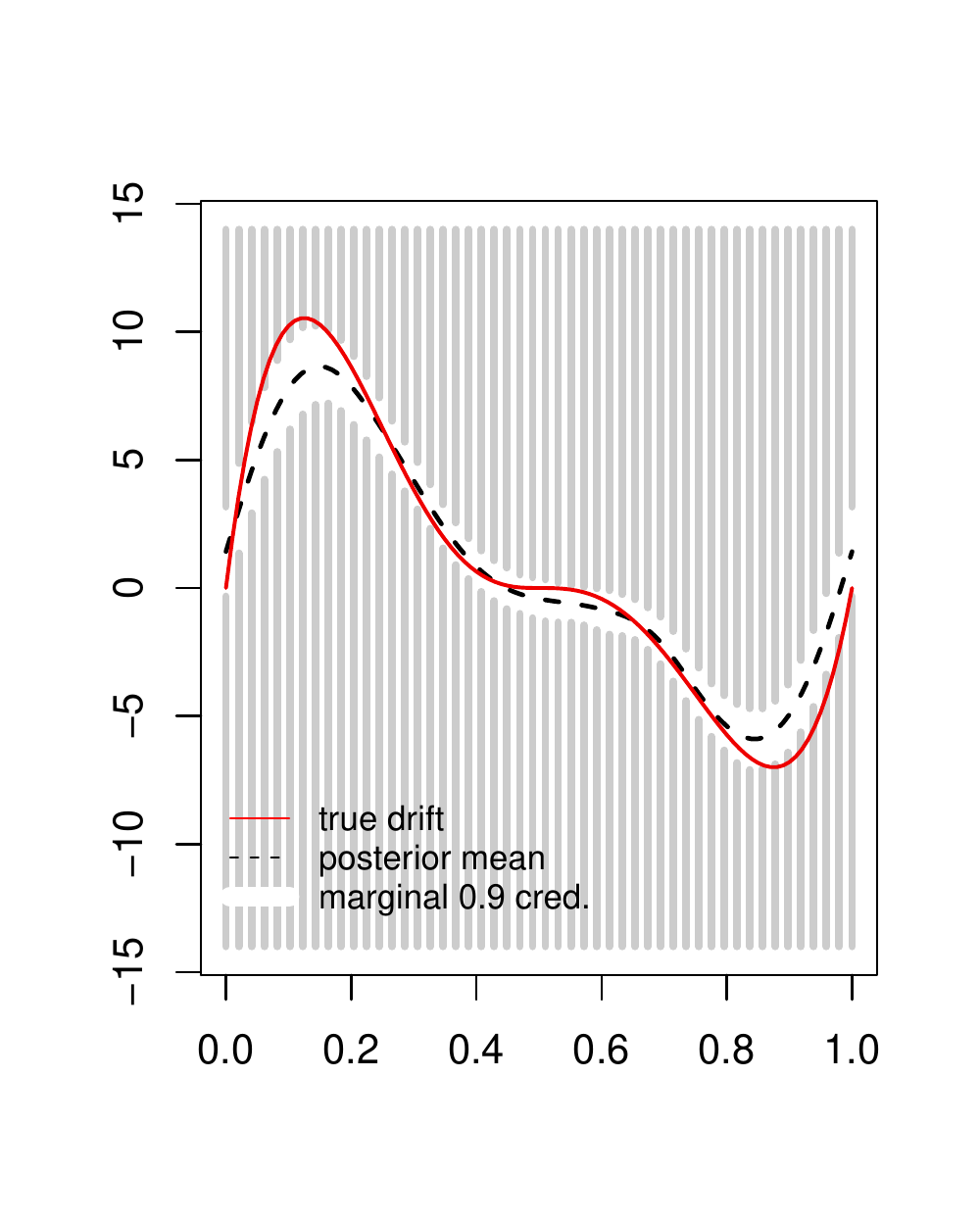}
\includegraphics[width=.3\linewidth]{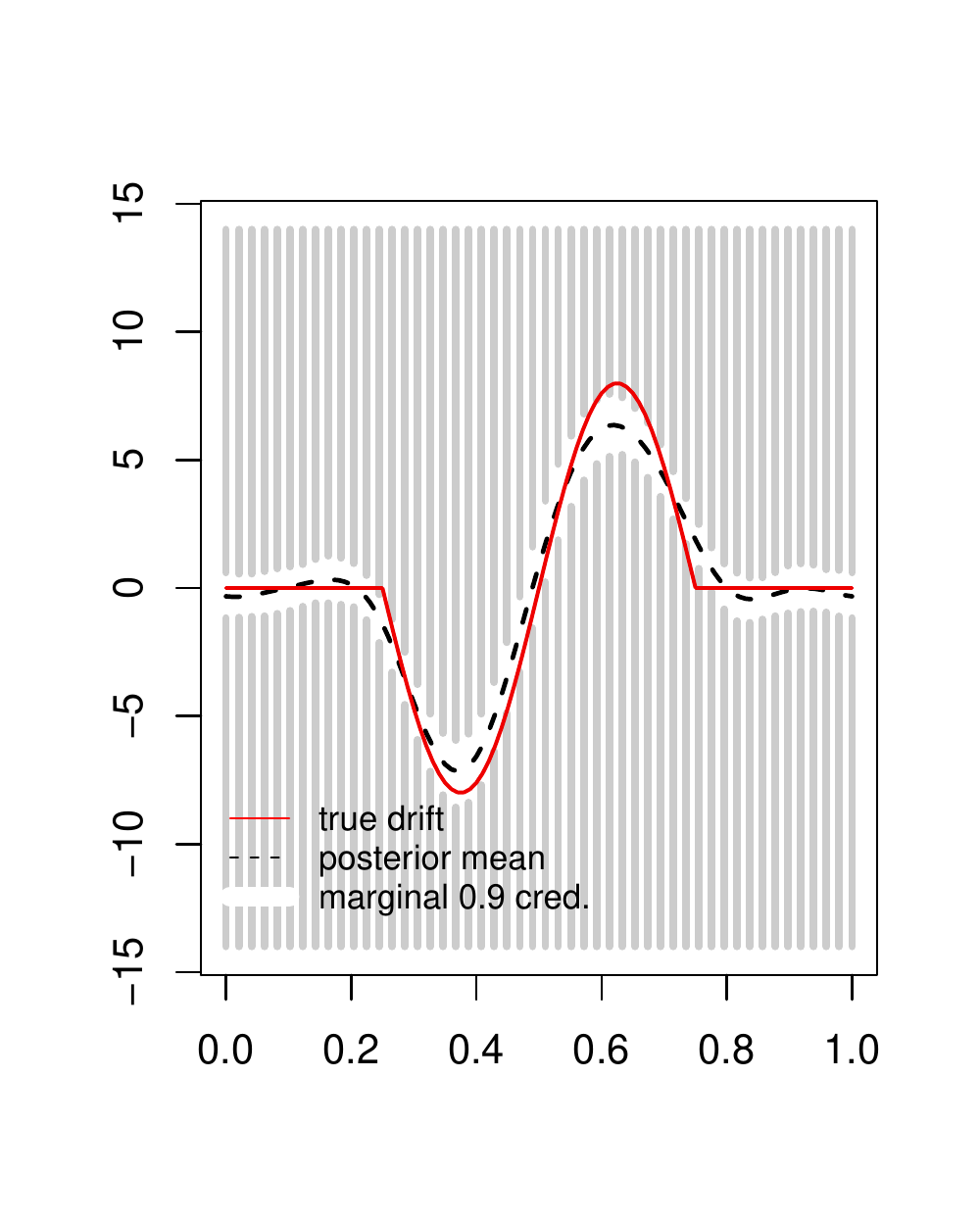}
\caption{Drift function (red, solid), posterior mean (black, dashed) and $90 \%$ pointwise credible bands. From left to right: $b_1$, $b_2$, and $b_3$. }\label{fig:gallery}
\end{center}
\end{figure}

\subsection{Comparison with \cite{Pokern}: Butane Dihedral Angle Time Series}

To compare our approach to that of \cite{Pokern} we analyzed the butane dihedral angle time series considered by these authors. After some preliminary operations on the data, these data are assumed to be discrete time observations from a scalar diffusion with unit-diffusion coefficient (details on this are described in \cite{Pokern} and supplementary material to this article).
After this preliminary step,  the time series consists of $4000$ observations observed evenly over the time  interval $[0,4]$ (time is measured in nanoseconds). The right-hand figure of  Figure \ref{fig:pokern} shows a histogram of the discrete time observations.


\cite{Pokern} use a centered Gaussian process prior with precision operator \eqref{eq: *},  with $\eta=0.02$, $\kappa=0$ and $p=2$. This choice for $p$ yields a prior of H\"older smoothness essentially equal to $1.5$.
As explained in the introduction, this is essentially  the law of the random function
\[
x \mapsto \sum_{l=1}^\infty \frac1{l^2\pi^2 \sqrt{\eta}}  Z_l \psi_l(x),
\]
where $\psi_l$ are the Fourier basis functions defined in Section \ref{sec: fourier} and the $Z_l$ are
independent standard normal random variables. Note that with $\eta=0.02$ we have $(\pi^4 \eta)^{-1}\approx 0.51$.
To match as closely as possible with their prior specification, we use the Fourier basis with $\beta=1.5$, as described in Section
\ref{sec: fourier}. Conditionally on $s^2$ and $j$, our prior then equals the law of the random function
\[ x \mapsto \sum_{l =1}^{2j-1} \frac{s}{l^2} Z_l \psi_l(x). \]
For $s^2$ we took an inverse Gamma prior with hyper parameters $a = b = 5/2$. For this choice
the prior mean for $s^2$ equals $0.5$ which is close to $(\pi^4 \eta)^{-1}$.

 For the prior on the models we use $p(j)\propto (0.95)^{m_j}$. As before, we ran the continuous time algorithm for $3000$ cycles and discarded the first $500$ iterations as burn-in. We used data augmentation with $99$ extra augmented time points in between two successive observations.


The left-hand figure in  Figure \ref{fig:pokern} shows the posterior mean and pointwise $68\%$ credible bands, both for our approach and that of \cite{Pokern}. Overall, the posterior means computed by both methods seem to agree very well except for the boundary areas. In these areas, the credible bounds are wider, since we have less information about the drift here.
In Figure \ref{fig:pokern2} histograms of the scaling parameter $s^2$ and the truncation level $J$ are shown. Clearly, $s^2$ takes values typically around a value as large as $5000$, much larger than $0.51$. This illustrates once again the usefulness of equipping the scaling parameter with a prior distribution.
The fact that our credible bands are wider near the boundary of the observation area seems to indicate that
\cite{Pokern} are somewhat overconfident about the form of the drift function in that area.
Their narrower credible bands  seem to be  caused by prior belief rather than information in the data and
are not corroborated by our more conservative approach.

\begin{figure}
\begin{center}
\includegraphics[width=.5\linewidth]{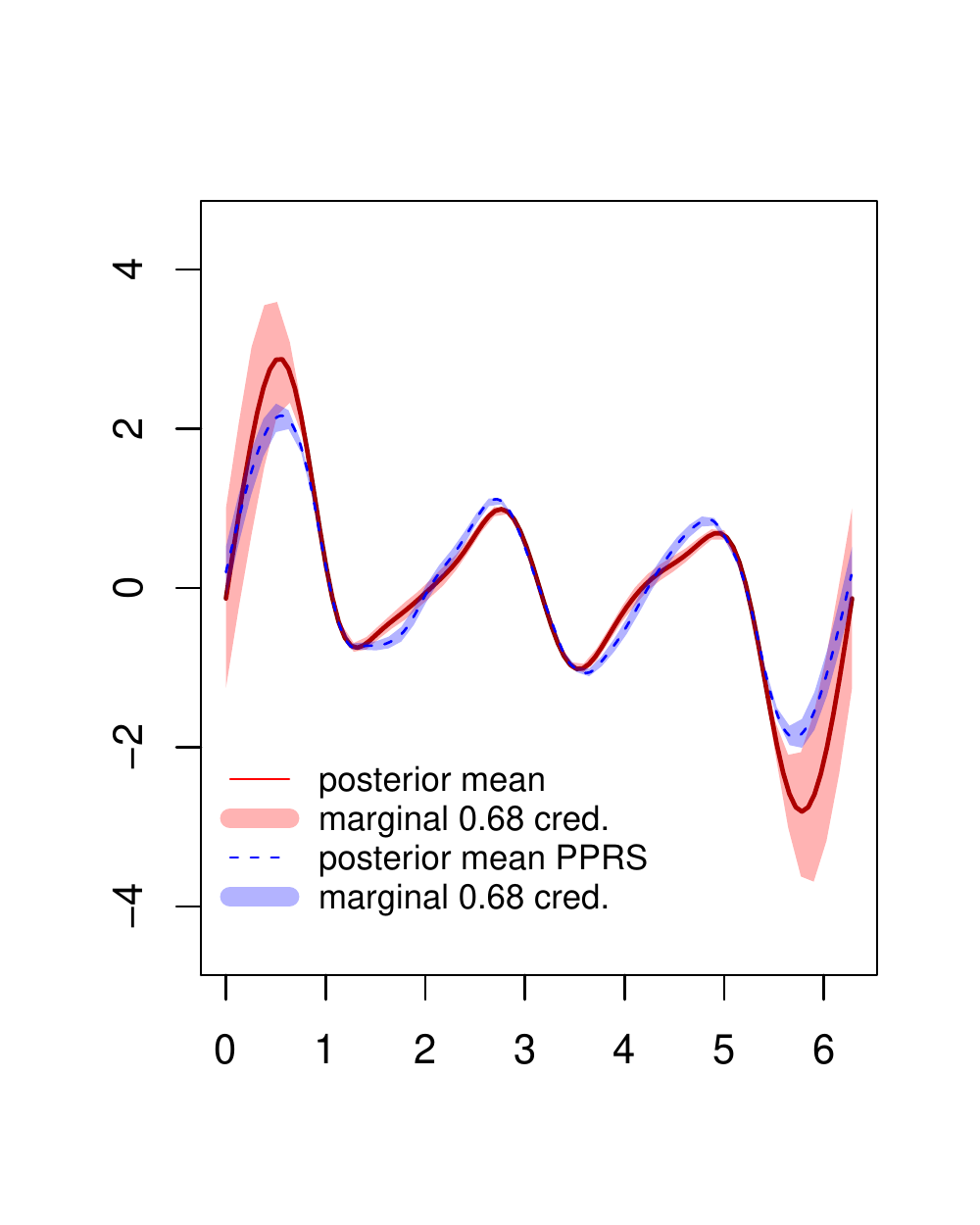}\includegraphics[width=.5\linewidth]{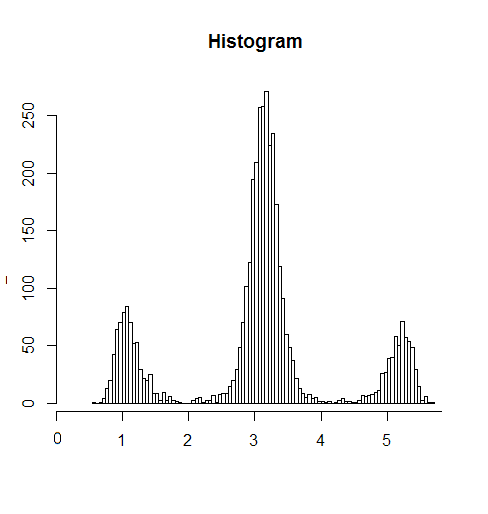}
\caption{Comparison of the estimate of drift using the Butane Dihedral Angle data. Red solid: A Fourier prior with $\beta=1.5$. Blue dashed: Results of \cite{Pokern}. The posterior mean with 68\% credible bands is pictured.  Right: Histogram of the data.  }\label{fig:pokern}
\end{center}
\end{figure}

\begin{figure}
\begin{center}
\includegraphics[width=\linewidth]{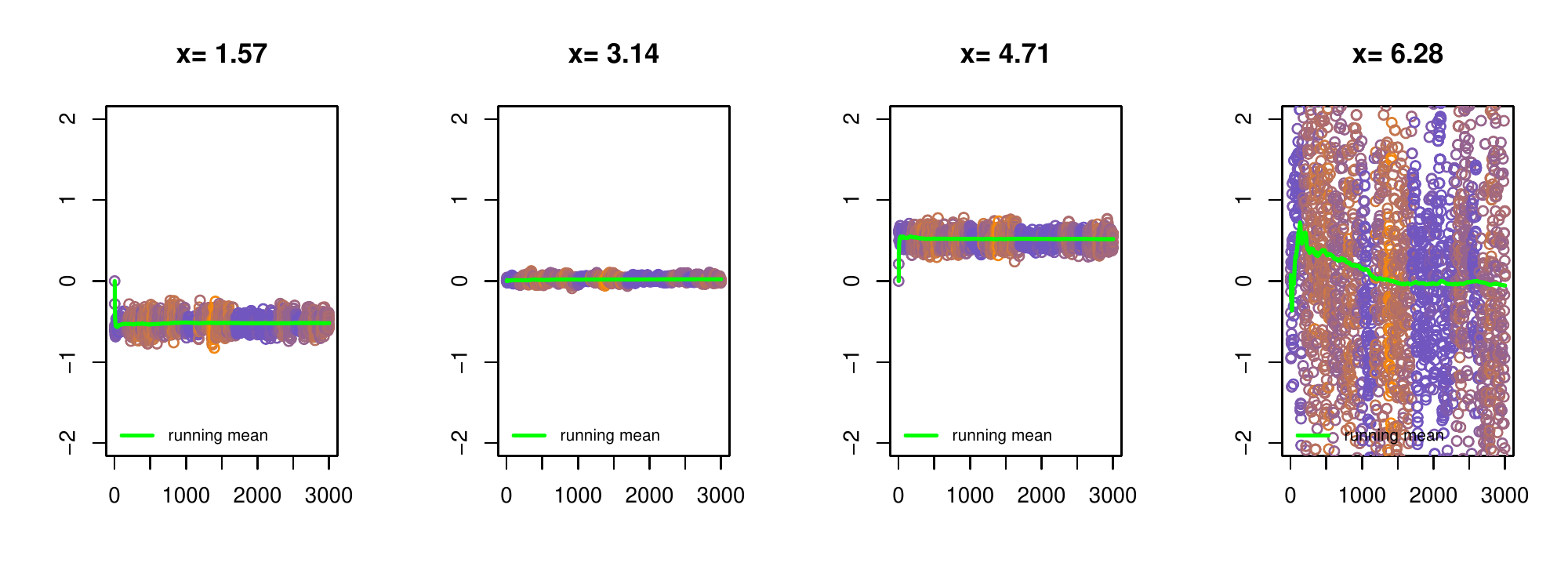}\\
\includegraphics[width=.45\linewidth]{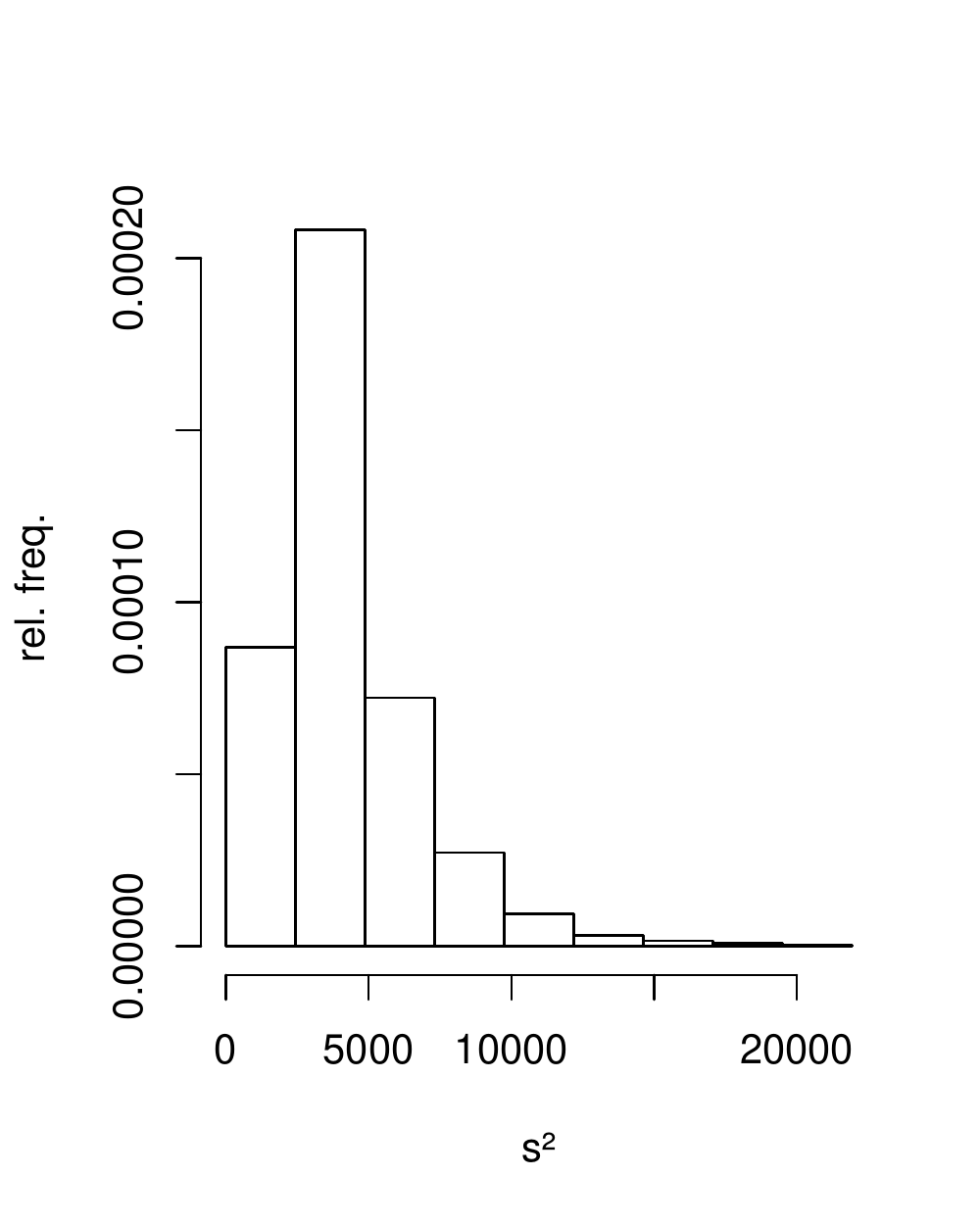}\includegraphics[width=.45\linewidth]{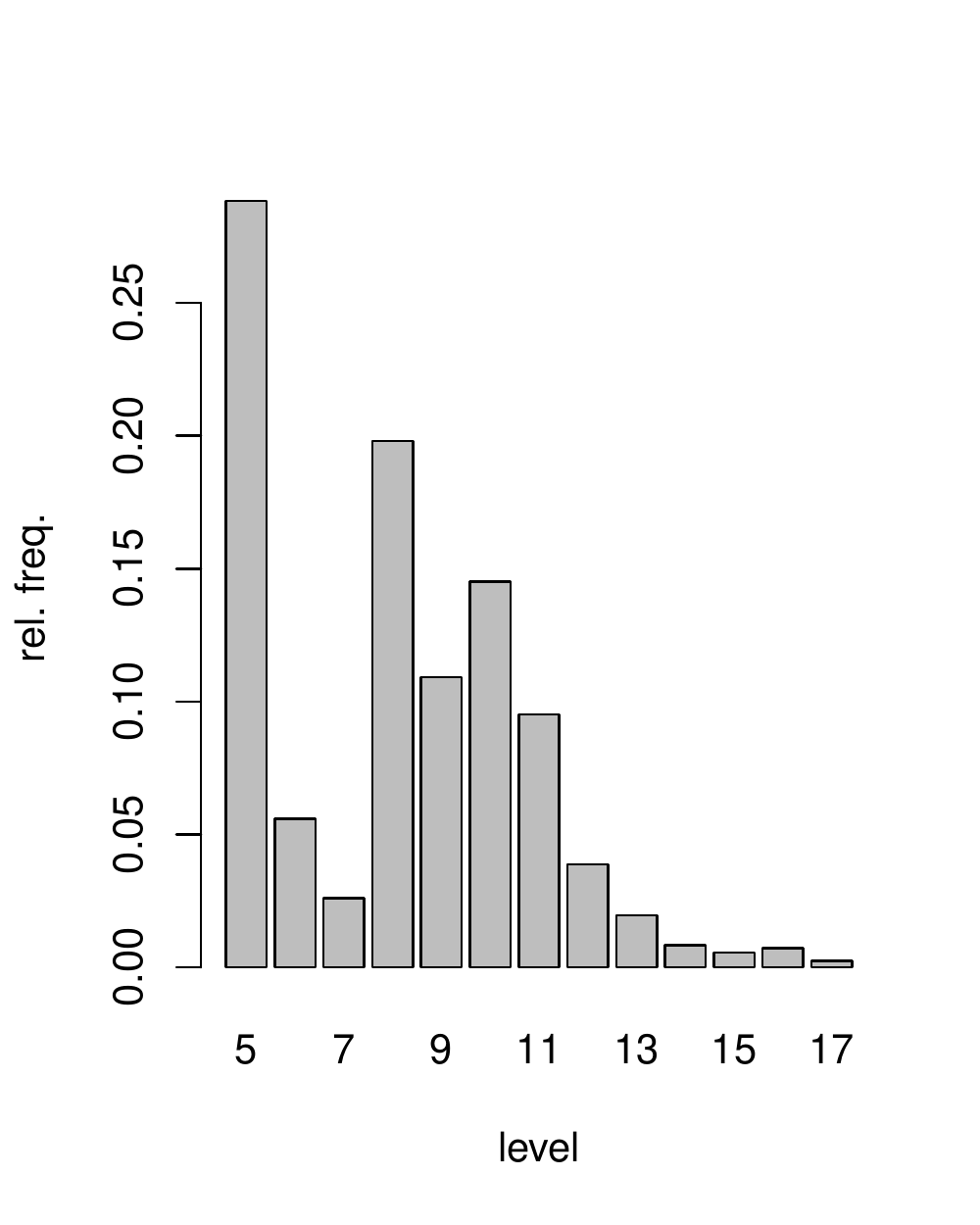}
\caption{Trace plot, histogram of $j$, histogram of $s^2$.}\label{fig:pokern2}
\end{center}
\end{figure}

\section{Numerical considerations}
\label{sec: num}

\subsection{Drawing coefficients from the posterior within a fixed model}\label{posteriorwithin}

For move II the algorithm requires to   sample a random vector  $U \sim N_{m_j}((W^{j})^{-1}\mu^{j}, (W^{j})^{-1})$.
In order to do so  we first compute the Cholesky decomposition of   $W^{j}$ (note that $W^{j}$ is symmetric and positive definite,
ensuring its existence). For an upper triangular matrix $M^j$ we then have
$W^{j} = (M^{j})^T M^{j}$.  Next we let $z^j$ solve the system $(M^j)^T z^j=\mu^j$,  draw
 a standard normal vector $Z \sim {\mr N}_{m_j}(0,I)$ and construct  $U$ by backward solving
\begin{equation}
M^j U =z^j + Z.
\label{eq:backsolve}
\end{equation}
It is easily seen that the random vector $U$ has the required distribution.

Backsolving linear equations with triangular matrices requires $\mc{O}(m_j^2)$ operations. Cholesky factors are computed in $\mc{O}(m_j^3)$ operations in general, but for basis functions  with local support, $\Sigma^j$ and $W^j$ are sparse, enabling enabling faster computations. For the Schauder basis, the number of non-zero elements of the upper triangular part of $\Sigma^j$ is $2^{j-1}(j-1) + 1$, so the fraction of non-zero elements of $\Sigma^j$ is approximately $1.00, 1.00, 0.88, 0.66, 0.45, 0.28, 0.17, \dots$ for $j=1,2,3,4,5, 6, 7, \dots$
The Cholesky factor of a sparse matrix is not necessarily sparse as well. However,  the sparsity pattern originating from the tree structure of the supports of the Schauder elements enables to specify a \emph{perfect elimination ordering} of the rows and columns of $\Sigma^j$ (for details we refer to \cite{Rose}). This means  that the Cholesky factor inherits the sparsity. Moreover,  the Cholesky factorization can be computed on the sparse representation of the matrix $\Sigma^j$. The particular reordering necessary -- reversing the order of rows and columns -- makes this technique applicable for moves within levels.

\subsection{Computation of the Bayes factors}

Our algorithms  require the evaluation of the Bayes factors defined by (\ref{eq:rel}).
The following lemma is instrumental in the numerical evaluation of these numbers.
Recall the definitions of $\mu^j$, $\Sigma^j$ and $W^j$ in Section \ref{sec: within}.

\begin{lem}\label{lem:predictive_density}
We have
$$	
p(x^T \given j, s^2) = \frac{ \exp\left(\tfrac12(\mu^j)^T( W^j)^{-1} \mu^j\right)}{\sqrt{ |s^2 W^j \Xi^j |}}.
$$
\end{lem}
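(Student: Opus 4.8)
The plan is to obtain $p(x^T\given j, s^2)$ as the marginal likelihood within model $j$, by integrating the coefficient vector $\theta^j$ out of the joint density. Concretely, I would start from
\[
p(x^T\given j, s^2) = \int_{\RR^{m_j}} p(x^T\given j, \theta^j, s^2)\, p(\theta^j\given j, s^2)\dd\theta^j,
\]
and substitute the two factors already identified in the proof of Lemma~\ref{lem:postcoeff}: the Girsanov likelihood \eqref{eq: gg}, namely $p(x^T\given j, \theta^j, s^2) = \exp\bigl((\theta^j)^T\mu^j - \tfrac12(\theta^j)^T\Sigma^j \theta^j\bigr)$, together with the normalized Gaussian prior density $p(\theta^j\given j, s^2) = (2\pi)^{-m_j/2}|s^2\Xi^j|^{-1/2}\exp\bigl(-\tfrac12(\theta^j)^T(s^2\Xi^j)^{-1}\theta^j\bigr)$. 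Multiplying these merges the two quadratic forms into the single matrix $W^j = \Sigma^j + (s^2\Xi^j)^{-1}$ in the exponent, exactly as in the first assertion of Lemma~\ref{lem:postcoeff}.

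The key step is then to complete the square in $\theta^j$, writing
\[
(\theta^j)^T\mu^j - \tfrac12(\theta^j)^T W^j\theta^j = -\tfrac12\bigl(\theta^j - (W^j)^{-1}\mu^j\bigr)^T W^j\bigl(\theta^j - (W^j)^{-1}\mu^j\bigr) + \tfrac12(\mu^j)^T(W^j)^{-1}\mu^j.
\]
The constant term $\tfrac12(\mu^j)^T(W^j)^{-1}\mu^j$ factors out of the integral and supplies the claimed numerator. The remaining integrand is an unnormalized $N_{m_j}\bigl((W^j)^{-1}\mu^j, (W^j)^{-1}\bigr)$ density, so by the standard Gaussian normalization its integral equals $(2\pi)^{m_j/2}|W^j|^{-1/2}$. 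Here I would invoke the positive-definiteness of $W^j$ already noted in the paper, which guarantees both that $(W^j)^{-1}$ exists and that the integral converges.

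Collecting the pieces yields
\[
p(x^T\given j, s^2) = \frac{(2\pi)^{m_j/2}|W^j|^{-1/2}}{(2\pi)^{m_j/2}|s^2\Xi^j|^{1/2}}\exp\bigl(\tfrac12(\mu^j)^T(W^j)^{-1}\mu^j\bigr) = \frac{\exp\bigl(\tfrac12(\mu^j)^T(W^j)^{-1}\mu^j\bigr)}{|W^j|^{1/2}\,|s^2\Xi^j|^{1/2}},
\]
where the factors $(2\pi)^{m_j/2}$ cancel. It remains only to match the denominator with the claimed form $\sqrt{|s^2 W^j\Xi^j|}$: using multiplicativity of the determinant and the homogeneity $|s^2\Xi^j| = (s^2)^{m_j}|\Xi^j|$, I would check that $|W^j|\,|s^2\Xi^j| = (s^2)^{m_j}|W^j|\,|\Xi^j| = |s^2 W^j\Xi^j|$, which identifies the two denominators.

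There is no substantive obstacle here; the argument is a routine Gaussian marginalization built directly on the computations in Lemma~\ref{lem:postcoeff}. The only place demanding a little care is the final determinant bookkeeping: one must read $s^2 W^j\Xi^j$ as the scalar $s^2$ times the matrix product $W^j\Xi^j$ and verify that the powers of $s^2$ agree, since a careless treatment of the notation could introduce a spurious factor of $(s^2)^{m_j}$.
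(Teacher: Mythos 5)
Your proposal is correct and follows essentially the same route as the paper's proof: marginalizing $\theta^j$ out of the joint density via \eqref{eq: gg} and the Gaussian prior, completing the square in $\theta^j$, and evaluating the resulting Gaussian integral. The only difference is that you spell out the final determinant identity $|W^j|\,|s^2\Xi^j| = |s^2 W^j \Xi^j|$, which the paper leaves implicit.
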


\begin{proof}
Since
$$	p(x^T \given j, s^2) =\int p(x^T \given j, \th^j,  s^2) p(\th^j \given j,  s^2) d \th^j$$
we have, by  (\ref{eq: gg}) and the definition of the prior,
\begin{equation}
\label{eq:pred_calc}
 p(x^T \given j, s^2) = \frac{1}{\sqrt{|2\pi s^2\Xi^j|}}
 \int e^{(\theta^j)^T\mu^j -\frac12(\theta^j)^TW^j \theta^j}  d \th^j.
\end{equation}
By completing the square we see that this is further equal to
\begin{align*}
& \frac{1}{\sqrt{|2\pi s^2\Xi^j|}}e^{\frac12(\mu^j)^T(W^j)^{-1}\mu^j}
\int e^{-\frac12(\theta^j - (W^j)^{-1}\mu^j)^TW^j (\theta^j -(W^j)^{-1}\mu^j)}d \th^j\\
& = \frac{1}{\sqrt{|2\pi s^2\Xi^j|}}e^{\frac12(\mu^j)^T(W^j)^{-1}\mu^j} \sqrt{|2\pi (W^j)^{-1}|}.
\end{align*}
This completes the proof.
\end{proof}

 As a consequence of Lemma \ref{lem:predictive_density},  we have
\begin{equation}\label{eq:ratio_pred}
   2 \log {B}({j'} \mid j) = (\mu^{j'})^T (W^{j'})^{-1} \mu^{j'}-(\mu^{j})^T (W^{j})^{-1} \mu^{j}
   + \log \Big(  \frac{|s^2W^{j}\Xi^{j}|}{|s^2 W^{j'}\Xi^{j'}|}\Big).
\end{equation}
We now show how the right-hand-side of the display can be evaluated in a numerical efficient and stable way. In the context of Gaussian Markov random fields related tricks have been used in \cite{Rue}.

Suppose  ${j'-j} = {k} > 0$ (if $k = 0$, ${B}(j'\mid j) = 1$ and for $k < 0$ the calculations are similar.)
First we compute $\mu^{j+k}$ and the Cholesky decomposition of  $W^{j+k}$ (the matrix is symmetric and positive definite, so its Cholesky decomposition exists). We obtain an upper triangular matrix $M^{j+k}$
such that $$W^{j+k} = (M^{j+k})^T M^{j+k}.$$
Next we apply the following theorem, taken from \cite{StewartI} (cf.\ Theorem 1.6 therein).

\begin{thm}\label{thm:blockLU}
Suppose the matrix $A$ can be portioned as
$$	A= \begin{bmatrix} A_{11} & A_{12} \\ A_{21} & A_{22}\end{bmatrix},$$
where $A_{11}$ is nonsingular. Then $A$ has a block $LU$ decomposition
$$ \begin{bmatrix} A_{11} & A_{12} \\ A_{21} & A_{22}\end{bmatrix}= \begin{bmatrix} L_{11} & 0 \\ L_{21} & L_{22}\end{bmatrix} \begin{bmatrix} U_{11} & U_{12} \\ 0 & U_{22}\end{bmatrix}
$$
where $L_{11}$ and $U_{11}$ are nonsingular. For such decomposition $A_{11}=L_{11} U_{11}$.
\end{thm}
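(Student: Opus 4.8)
The plan is to prove the statement by explicit construction, carrying out one step of block Gaussian elimination with $A_{11}$ serving as the pivot block. First I would expand the proposed product on the right-hand side and read off the four block identities that any factorization of the stated form must satisfy, namely
$$A_{11} = L_{11}U_{11}, \quad A_{12} = L_{11}U_{12}, \quad A_{21} = L_{21}U_{11}, \quad A_{22} = L_{21}U_{12} + L_{22}U_{22}.$$
The first of these is exactly the concluding assertion of the theorem, so it requires no separate argument: for \emph{any} block LU decomposition of this form, matching the $(1,1)$ block forces $A_{11} = L_{11}U_{11}$, and since $L_{11}$ and $U_{11}$ are nonsingular this is consistent with $A_{11}$ being nonsingular.

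For the existence part I would exhibit a concrete factorization satisfying the four identities. The simplest gauge is $L_{11} = I$ and $U_{11} = A_{11}$, both nonsingular by hypothesis. The second and third identities then determine $U_{12} = A_{12}$ and $L_{21} = A_{21}A_{11}^{-1}$, and it is precisely here that invertibility of $A_{11}$ is used, to solve $A_{21} = L_{21}A_{11}$ for $L_{21}$. Substituting into the fourth identity leaves $L_{22}U_{22} = A_{22} - A_{21}A_{11}^{-1}A_{12}$, the Schur complement of $A_{11}$ in $A$; taking $L_{22} = I$ and $U_{22}$ equal to this Schur complement completes the construction and verifies that $L_{11}$, $U_{11}$ are nonsingular as claimed.

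There is no genuine obstacle: the argument is a direct verification once the role of the Schur complement is recognized, and the hypothesis on $A_{11}$ enters at a single point. The only subtlety worth flagging is that the theorem does not assert uniqueness — the diagonal factors $L_{ii}, U_{ii}$ are pinned down only up to an invertible block, which is why the construction fixes a particular normalization (here, unit diagonal blocks). Since the result is quoted from \cite{StewartI} purely as a tool for rewriting the determinant and quadratic-form terms in (\ref{eq:ratio_pred}), I would keep the proof at this constructive level and not pursue a full classification of all admissible factorizations.
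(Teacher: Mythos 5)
Your proof is correct, but there is no ``paper's proof'' to compare it against: the article states this result purely as an imported tool, citing Theorem~1.6 of \cite{StewartI}, and gives no argument of its own. Your constructive route is the standard one and is complete: expanding the block product gives the four identities, the $(1,1)$ identity $A_{11}=L_{11}U_{11}$ holds for \emph{any} factorization of the stated shape by mere block multiplication (so the final assertion of the theorem needs no hypothesis beyond the form of the factors), and existence follows from the explicit choice $L_{11}=I$, $U_{11}=A_{11}$, $U_{12}=A_{12}$, $L_{21}=A_{21}A_{11}^{-1}$, $L_{22}=I$, $U_{22}=A_{22}-A_{21}A_{11}^{-1}A_{12}$, with nonsingularity of $A_{11}$ used both to make $U_{11}$ invertible and to form $A_{21}A_{11}^{-1}$. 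One remark on fit to the application: in the paper the theorem is invoked for the Cholesky factorization $W^{j+k}=(M^{j+k})^TM^{j+k}$, where the only content actually used is the $(1,1)$-block identity, i.e.\ that $M^j$ is the leading principal block of $M^{j+k}$; your observation that this part is a tautology of block multiplication is exactly the point that matters there, and your particular normalization ($L_{11}=I$) plays no role in that use. So the proposal is a sound, self-contained replacement for the external citation, at the cost of proving slightly more than the paper needs.
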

The Cholesky decomposition factor $M^j$  of $W^j$ hence equals  the upper left block of $M^{j+k}$, which is obtained by retaining only the
first $m_j$ rows and columns of $M_{j+k}$.
Also note that the vector $\mu^j$ is obtained from $\mu^{j+k}$ by retaining only the first $m_j$ elements.

Now if $z^{j+k}$ is the solution to $(M^{j+k})^T z^{j+k}= \mu^{j+k}$, then
$(\mu^{j+k})^T(W^{j+k})^{-1} \mu^{j+k}=\|z^{j+k}\|^2$. If we similarly define $z^j$ as the solution to $(M^j)^T z^j=\mu^j$,
then $z^{j+k} = [z^j, g^{j+k}]$ where $g^{j+k}$ contains the last  $m_{j+k}-m_j$ elements of $z^{j+k}$. Therefore
$$
(\mu^{j+k})^T (W^{j+k})^{-1} \mu^{j+k}-(\mu^{j})^T (W^{j})^{-1} \mu^{j} = \|z^{j+k}\|^2-\|z^j\|^2 = \|g^{j+k}\|^2.
$$
Furthermore,
$$	 \log \Big( \frac{|s^2 W^{j}\Xi^{j}|}{|s^2 W^{j+k}\Xi^{j+k}|}\Big)
=-\sum_{i=m_j+1}^{m_j+k}  \log (s^2 \xi^2_{i}) +  \log \Big(\frac{|W^j|}{|W^{j+k}|}\Big).$$
The second term on the right-hand side equals
$$
2\log \Big(\frac{|M^j|}{|M^{j+k}|}\Big) = -  2\sum_{i=m_j+1}^{m_{j+k}} \log M^{j+k}_{i,i}.
$$
Therefore, we have
\begin{equation}\label{eq:bf_numerical}
2 \log {B}({j'} \given j) = \|g^{j+k} \|^2-  2
\sum_{i=m_j+1}^{m_{j+k}} \log \left(s\xi_iM^{j+k}_{i,i}\right).
\end{equation}
We can summarize our findings as follows.

\bigskip

\begin{center}
\begin{tabular}{|p{0.9\textwidth}|}
\hline
{\bf Algorithm to compute the Bayes factor ${B}({j'} \given  j)$ for ${j'}=j+k$}\\
\hline
\hline
$\bullet$  Compute $\mu^{j+k}$, $\Sigma^{j+k}$ and $W^{j+k}$.\\
$\bullet$ Obtain the Cholesky decomposition of $W^{j+k}$ so that $W^{j+k}=(M^{j+k})^TM^{j+k}$ \\
\quad for an upper-triangular matrix $M^{j+k}$.\\
$\bullet$ Solve $z^{j+k}$ from $(M^{j+k})^T z^{j+k} = \mu^{j+k}$ and partition the solution into \\
\quad $z^{j+k}=[z^j, g^{j+k}]$,  where $\dim(z^j)=m_j$.\\
$\bullet$ Compute ${B}({j'} \given j)$ from (\ref{eq:bf_numerical}). \\
\hline
\end{tabular}
\end{center}

\bigskip

 \section{Concluding remarks}
 Estimation of diffusion processes has attracted a lot of attention in the past two decades. Within the Bayesian setup very few articles have considered the problem of nonparametric estimation. In this article we propose an alternative approach to the method detailed in  \cite{Pokern}. From the simulations it turns out that our method can provide good results.

The simulation results indicate that the posterior mean can be off the truth if  the prior specification is inappropriate in the sense that
\begin{itemize}
\item  the multiplicative scale $s$ is fixed  at a value either too high or too low;
\item  a truncation level is fixed  and  the smoothness of the prior (governed by  $\beta$) is chosen  inappropriately.
\end{itemize}
The first of these problems can be circumvented by specifying a prior distribution on the scaling parameter. As regards the second problem, endowing the truncation level with a prior and employing a reversible jump algorithm, it turns out that reasonable results can be obtained if we erroneously undersmooth by choosing the regularity of the prior too small. For a fixed high truncation level this is certainly not the case. In case the prior is smoother than the true drift function, both reversible jumps and a  fixed high-level model
can  give bad results. Overall however,  simulation results indicate that our method is  more robust against prior misspecification.

 It will be of great interest to complement our numerical results
with mathematical  results providing theoretical performance guarantees and giving further insight in limitations as well.
Another interesting possible extension is to endow the regularity parameter $\beta$ with a prior as well and let the data
determine its appropriate value. This destroys the partial conjugacy however and it is a challenge to devise numerically feasible
procedures for this approach.

\section*{Acknowledgement}

The authors thank Yvo Pokern for providing the data and code used in the paper \cite{Pokern}.

\bibliographystyle{elsarticle-harv}
\bibliography{lit}






\end{document}